\documentclass[11pt,bezier]{article}
\usepackage{amsmath,amssymb,amsfonts,euscript}
\usepackage{algorithmic, algorithm}
\usepackage{graphicx}
\usepackage{xcolor}

\textwidth = 15 cm \textheight = 20 cm

\usepackage[ansinew]{inputenc}
\usepackage{graphicx}
\usepackage{color}
\usepackage{mathrsfs}

\usepackage[colorlinks]{hyperref}
\usepackage[active,new,noold,marker]{xrcs}
\catcode`\=13
\def{$\bowtie$}

\usepackage{eurosym}
\usepackage{lscape} 

\oddsidemargin =0.7 cm
\evensidemargin = -3 cm \topmargin = 0 cm
\parskip = 2 mm

\newtheorem{prethm}{{\bf Theorem}}

\newenvironment{thm}{\begin{prethm}{\hspace{-0.5
               em}{\bf}}}{\end{prethm}}

\newtheorem{prepro}{{\bf Theorem}}

\newtheorem{preprop}{{\bf Proposition}}

\newtheorem{precor}{{\bf Corollary}}

\newenvironment{cor}{\begin{precor}{\hspace{-0.5
               em}{\bf}}}{\end{precor}}

\newtheorem{preconj}{{\bf Conjecture}}

\newenvironment{conj}{\begin{preconj}{\hspace{-0.5
               em}{\bf}}}{\end{preconj}}

\newtheorem{predefi}{{\bf Definition}}

\newtheorem{preremark}{{\bf Remark}}

\newenvironment{remark}{\begin{preremark}\rm{\hspace{-0.5
               em}{\bf}}}{\end{preremark}}

\newtheorem{preexample}{{\bf Example}}

\newtheorem{prelem}{{\bf Lemma}}

\newenvironment{lem}{\begin{prelem}{\hspace{-0.5
               em}{\bf}}}{\end{prelem}}

\newtheorem{prelam}{{\bf Lemma}}

\newtheorem{preprob}{{\bf Problem}}

\newenvironment{prob}{\begin{preprob}{\hspace{-0.5
               em}{\bf.}}}{\end{preprob}}

\newtheorem{preproof}{{\bf Proof}}

\newenvironment{proof}[1]{\begin{preproof}{\rm
               #1}\hfill{$\Box$}}{\end{preproof}}

\newtheorem{preali}{{\bf Proof of Theorem 1.}}

\newenvironment{ali}[1]{\begin{preali}{\rm
               #1}\hfill{$\Box$}}{\end{preali}}

\newtheorem{prealii}{{\bf Proof of Theorem 2.}}

\newenvironment{alii}[1]{\begin{prealii}{\rm
               #1}\hfill{$\Box$}}{\end{prealii}}

\newtheorem{prealiii}{{\bf Proof of Theorem 3.}}

\newenvironment{aliii}[1]{\begin{prealiii}{\rm
               #1}\hfill{$\Box$}}{\end{prealiii}}

\newtheorem{prealiiii}{{\bf Proof of Theorem 4.}}

\newenvironment{aliiii}[1]{\begin{prealiiii}{\rm
               #1}\hfill{$\Box$}}{\end{prealiiii}}

\newtheorem{prealij}{{\bf Proof of Theorem 5.}}

\newenvironment{alij}[1]{\begin{prealij}{\rm
               #1}\hfill{$\Box$}}{\end{prealij}}

\newtheorem{prealijj}{{\bf Proof of Theorem 6.}}

\newenvironment{alijj}[1]{\begin{prealijj}{\rm
               #1}\hfill{$\Box$}}{\end{prealijj}}

\newtheorem{prealijjj}{{\bf Proof of Theorem 7.}}

\newenvironment{alijjj}[1]{\begin{prealijjj}{\rm
               #1}\hfill{$\Box$}}{\end{prealijjj}}

\newtheorem{prealijjjk}{{\bf Proof of Theorem 8.}}

\newenvironment{alijjjk}[1]{\begin{prealijjjk}{\rm
               #1}\hfill{$\Box$}}{\end{prealijjjk}}

\title{On the algorithmic complexity of decomposing graphs into regular/irregular
structures}

\author{{\normalsize
{Arash Ahadi${}^{\mathsf{a}}$},\,
{Ali Dehghan${}^{\mathsf{b}}$},\,
{ Mohammad-Reza Sadeghi${}^{\mathsf{c}}$},\,
{ Brett Stevens${}^{\mathsf{d}}$}\,
}\vspace{3mm}
\\
{\footnotesize{${}^{\mathsf{a}}$\it Department of
Mathematical Sciences, Sharif University of Technology, Tehran,
Iran}}
{\footnotesize{}}\\
{\footnotesize{${}^{\mathsf{b}}$\it
Systems and Computer Engineering Department, Carleton University, Ottawa,   Canada}}
{\footnotesize{}}\\
{\footnotesize{${}^{\mathsf{c}}$\it
Department of Mathematics and Computer Science,
Amirkabir University of Technology, Tehran, Iran}}
{\footnotesize{}}\\
{\footnotesize{${}^{\mathsf{d}}$\it
School of Mathematics and Statistics, Carleton University, Ottawa, Canada }}
\thanks{{\it E-mail addresses}:  $\mathsf{arash\_ahadi@mehr.sharif.edu}$ (Arash Ahadi), $\mathsf{alidehghan@sce.carleton.ca}$ (Ali Dehghan), $\mathsf{msadeghi@aut.ac.ir}$ (Mohammad-Reza Sadeghi), $\mathsf{brett@math.carleton.ca}$ (Brett Stevens). } }

\date{}
\begin{document}
\maketitle

\begin{abstract}
{\small \noindent
A {\em locally irregular} graph is a graph whose adjacent vertices have
distinct degrees, a {\em regular} graph is a graph where each vertex has the same degree and
a {\em locally regular} graph is a graph where for
every two adjacent vertices $u,v$, their degrees are equal.
In this paper, we investigate the set of all problems which are related to decomposition of graphs into regular, locally regular
and/or locally irregular
subgraphs and we present some
polynomial time algorithms, {\bf NP}-completeness results, lower bounds and upper bounds for them.
Among our results,  one of our lower bounds makes  use of  mutually orthogonal Latin squares which is  relatively novel.
}

\begin{flushleft}
\noindent {\bf Key words:} Locally irregular graph; locally regular graph; 1-2-3 Conjecture;
graph decomposition; mutually orthogonal Latin squares; semi-coloring; computational complexity.

\end{flushleft}

\end{abstract}

\section{Introduction}
\label{}

For a  family $\mathcal{F}$ of graphs, an {\it $\mathcal{F}$-decomposition}
of a graph $G$ is a decomposition of the edge set of the graph $G$ into subgraphs isomorphic
to members of the family $\mathcal{F}$. Note that the family $\mathcal{F}$ of graphs can be anything, for instance, all regular graphs or all complete graphs.  During the last decade, the computational complexity of this problem has received a considerable
attention. For example, Holyer proved that it is {\bf NP}-hard to decompose the edges of a graph
into the minimum number of complete subgraphs \cite{MR635429}. For more examples see \cite{ahadi2017algorithmic,  MR2108396, MR2556522, MR3017971} and the references therein.

A {\it locally irregular} graph is a graph whose adjacent vertices have
distinct degrees. Also, a {\it regular} graph is a graph where each vertex has the same degree
and a {\it locally regular} graph is a graph where for
every two adjacent vertices $u$ and $v$, their degrees are equal.
In this work, we consider the set of all problems which are related to decomposition of graphs into regular and/or locally irregular
subgraphs and
we present some
polynomial time algorithms, {\bf NP}-completeness results, upper bounds and lower bounds for them.
A summary of our results and open problems are shown in Table 1 and Table 2. These graph families have received attention recently because of their relationship to the 1-2-3 Conjecture \cite{MR2047539}. (For more information about the 1-2-3 Conjecture and it variations see \cite{MR3478612, MR3624793,   MR3315373, MR3589724, MR2881497}  and the references therein.)

Before we start we would like to draw the readers attention to the different
kinds of subgraphs that are suitable parts of a decomposition in this paper.  When a
decomposition has a {\em locally irregular} component, this subgraph, $G_i$ has the
property that if $u$ and $v$ are adjacent in $G_i$ then their degrees in $G_i$ must
be different, $d_i(u) \neq d_i(v)$.  A component, $G_i$ is a {\em regular} subgraph if
$d_i(u)$ is constant for all vertices in $V(G_i)$.  These two are the quite standard and
we have restated them for completeness and clarity.  Two other kinds of component subgraphs
are investigated at various points in this paper which are less traditional and we define
them here so that when the reader encounters them herein, she will have already seen the
distinction.  A component, $G_i$ of an edge decomposition is {\em locally regular} if for
all adjacent $u, v \in V(G_i)$, their degrees in $G_i$ must be equal, $d_i(u) = d_i(v)$.  Clearly
all regular graphs are locally regular but the disjoint union of a cycle and an edge is locally
regular without being regular.  The final type of component allowed is one in which each component
is permitted to be either regular or locally irregular.  Once again a regular, locally regular or
locally irregular  graph fits this criterion but the disjoint union of a cycle, an edge and a non-trivial
star satisfies the criterion and is neither regular, locally regular nor locally irregular.

\section{Our results and motivations}

\subsection{Locally irregular graphs}

Motivated by  the fact that every connected graph with at least two vertices contains a pair of vertices of the same degree and the 1-2-3 Conjecture, we consider
the locally irregular graphs.
In 2015, Baudon {\it et al.} introduced the notion
of decomposition into locally irregular subgraphs, where by a decomposition
they mean a partitioning of the edges \cite{baudon2013decomposing}.
In such a case, we want to decompose the graph $G$ into
locally irregular subgraphs, where by a decomposition of the graph $G$ into $k$ locally
irregular subgraphs
 we refer to a decomposition $E_1, \ldots, E_k$ of $E(G)$ such
that the subgraph $G[E_i]$ is locally irregular for every $i =1,\ldots,k$.
The {\it irregular chromatic index} of the graph $G$, denoted by $ \chi'_{irr}$,
is the minimum number $k$ such that the graph $G$ can be decomposed into $k$ locally
irregular subgraphs.

Baudon {\it et al.}
identified all connected graphs which cannot be decomposed into locally
irregular subgraphs and  call them {\it exceptions} \cite{baudon2013decomposing}.
They conjectured that apart from these exceptions all other connected
graphs can be decomposed into three locally irregular subgraphs \cite{baudon2013decomposing}.

\begin{conj} \label{C1} {\it \cite{baudon2013decomposing}}
For each non-exception graph $G$, we have $\chi'_{irr}(G)\leq 3$.
\end{conj}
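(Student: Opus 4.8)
The plan is to reduce the statement to highly structured graphs and then apply a concentration argument, since a direct attack on an arbitrary graph is hopeless. A convenient first reformulation is to regard the degree function of a subgraph as a vertex coloring: a subgraph $G_i$ is locally irregular exactly when the map $v \mapsto d_i(v)$ is a proper coloring of $G_i$. Thus decomposing $G$ into three locally irregular subgraphs means partitioning $E(G)$ into $E_1,E_2,E_3$ so that in each $G[E_j]$ adjacent vertices receive distinct degrees. I would first isolate the genuinely hard instances: a non-regular graph already carries degree differences that help break local regularity, so the core difficulty concentrates on regular graphs. Hence the first step is a reduction lemma showing that it suffices, up to the three available colors (or at worst after peeling off one locally irregular subgraph), to treat $d$-regular graphs $G$.

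The second step is the bipartite case, where extra structure is available and one aims for the stronger bound $\chi'_{irr}\le 2$ (already known for many bipartite classes). Using the bipartition $A\cup B$, one tries to split the decomposition so that one class carries the edges where the endpoint in $A$ has higher degree and the other carries those where the endpoint in $B$ does, and then verifies local irregularity component by component; the computation $C_4=P_3\cup P_3$ is the smallest illustrative case, and $P_5=P_3\cup P_3$ shows how even paths behave. Establishing this cleanly, including the precise list of exceptions, is routine but must be done with care, because the exceptions (odd paths such as $P_4$, odd cycles, and the triangle-based family) cluster exactly among the sparse graphs.

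The third and decisive step is the regular non-bipartite case, which I would attack probabilistically. Color each edge of the $d$-regular graph $G$ independently and nearly uniformly between colors $1$ and $2$. Each vertex then has degree concentrated around $d/2$ in each class, with fluctuations of order $\sqrt{d}$, so most adjacent pairs already differ in their color-$1$ degree; the few edges whose endpoints accidentally coincide in some class would be corrected by moving them into the reserved third class and checking, via the Lov\'asz Local Lemma, that these local fixes can be performed simultaneously without creating new conflicts. This yields the bound $3$ once $d$ exceeds a large absolute threshold. The remaining regime of small degree $d$ is then handled by a finite structural case analysis, using $2$-factorizations of even-degree regular graphs and handle (ear) decompositions to reduce to cycles and short paths.

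The main obstacle is precisely the gap between these two regimes. The probabilistic argument is intrinsically an ``asymptotic in $d$'' statement: concentration and the Local Lemma only deliver three colors when $d$ is large, whereas the conjecture asserts $3$ for \emph{every} non-exception graph, including the sparse ones where the probabilistic slack vanishes and the exceptions proliferate. Closing this gap, and simultaneously guaranteeing that the reduction to the regular case does not itself cost a fourth color, is exactly where all current approaches stall. This is why only a weaker absolute constant bound, rather than the sharp value $3$, has so far been established, and why the statement remains a conjecture rather than a theorem.
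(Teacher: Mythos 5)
The statement you were asked to prove is not a theorem of this paper at all: it is Conjecture~1, quoted from Baudon~et~al.\ \cite{baudon2013decomposing}, and it remains open. The paper offers no proof of it (it only proves results \emph{motivated} by it, such as Theorems~1--3), so there is nothing to compare your argument against --- and, as you yourself concede in your final paragraph, your proposal is an attack outline rather than a proof. That candour is to your credit, but the outline as written cannot be completed into a proof of the conjecture, so the verdict has to be that there is a genuine gap: the entire statement is unproven.

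Beyond the global issue, two specific steps in your sketch would fail even as parts of a partial result. First, your second step aims for $\chi'_{irr}\le 2$ for bipartite graphs, but this is false in general: there are (non-exception) bipartite graphs, indeed trees, with irregular chromatic index exactly~$3$, as the paper itself notes, and the best known general bipartite upper bound is~$7$ (Lu\v{z}ar et al., cited herein), not~$2$. The ``orient by which side has larger degree'' splitting does not control degrees within each class well enough to be proper. Second, the reduction lemma in your first step --- peeling off one locally irregular subgraph to reduce to the regular case --- is precisely the step that costs a color and is not known to be achievable within the budget of three; combined with the fact that the probabilistic/Local Lemma argument in your third step only kicks in for large $d$ and only yields an absolute constant (this is essentially how the known bounds of $328$ and later $220$ were obtained), the approach provably cannot reach the sharp value~$3$ without a fundamentally new idea for sparse and small-degree graphs, where all the exceptional behaviour lives.
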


Afterwards, Bensmail {\it et al.} proved that  every bipartite graph $G$ which is not an odd length path satisfies $\chi'_{irr}(G)\leq 10$ \cite{bensmail2016decomposing}. Recently, Lu{\v{z}}ar {\it et al.} improved   the upper bound for bipartite graphs and general graphs, into 7 and 220, respectively \cite{luvzar2016new}.

From another point of view, Bensmail and Stevens considered the problem of decomposing the edges of graph into
some subgraphs, such that {\bf in each subgraph every component} is either
regular
or locally irregular \cite{bensmail2014edge}. The {\it regular-irregular chromatic index} of graph $G$,
 denoted by $ \chi'_{reg-irr}$,
is the minimum number $k$ such that $G$ can be decomposed into $k$ subgraphs, such that each component  of every
subgraph is locally
irregular or regular \cite{bensmail2014edge}.
They conjectured that the edges of every graph can be
decomposed into at most two subgraphs, such that each component of every subgraph
 is regular or locally irregular \cite{bensmail2014edge}.

\begin{conj} \label{C2}  {\it \cite{bensmail2014edge}}
For each graph $G$, we have $\chi'_{reg-irr}(G)\leq 2$.
\end{conj}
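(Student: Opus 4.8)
The plan is to recast the conjecture as the existence of a $2$-colouring of $E(G)$ in which every monochromatic component is itself regular or locally irregular, and to assemble such a colouring out of two particularly convenient families of ``good'' pieces: disjoint unions of stars (a star with at least two edges is locally irregular, a single edge is $1$-regular, so every star forest automatically meets the criterion component-by-component) and regular subgraphs. First I would dispose of the trivial cases: if $G$ is regular or locally irregular then $\chi'_{reg-irr}(G)\le 1$, so I may assume throughout that $G$ is connected, non-regular and not locally irregular, and I may treat each connected component of a disconnected graph separately.

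The core of the approach is to split the edges according to their degree structure. I would try to place into $E_1$ a spanning star forest that absorbs the vertices of ``extreme'' degree — for instance, orient the edges and let selected high-degree vertices each claim a star of their out-edges — so that $G[E_1]$ is a union of stars and hence valid, while the leftover graph $G[E_2]=G-E_1$ has been regularised enough that each of its components is regular. A variant better suited to the emphasis of this paper is to start from a decomposition of $G$ into a bounded number of locally irregular subgraphs (which exists for every non-exception by the results of Baudon {\it et al.}\ together with the constant bounds of Lu{\v{z}}ar {\it et al.}), and then to amalgamate those classes down to exactly two, using regular components to soak up the degree collisions created when two locally irregular subgraphs are merged.

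The step I expect to be the genuine obstacle is precisely this amalgamation, namely guaranteeing that both colour classes satisfy the criterion \emph{simultaneously}. The union of two locally irregular graphs need not be locally irregular, and a local repair that restores the regular-or-locally-irregular property in one class typically destroys it in the other; the difficulty concentrates at vertices of intermediate degree, which ``want'' to lie in a regular component in one class and in a star in the other. Controlling these competing constraints globally — rather than merely bounding $\chi'_{reg-irr}(G)$ by some larger constant — is the heart of the matter, and is exactly why the value $2$ in this conjecture is so delicate. I would attempt to manage it by induction on $|V(G)|$, deleting a vertex of minimum degree, $2$-colouring the remaining edges, and re-inserting the deleted edges into whichever class keeps both incident components valid; failing a clean inductive step, I would fall back on an entropy-compression or local-rebalancing argument in the spirit of the recent progress on the 1-2-3 Conjecture.
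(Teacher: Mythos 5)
This statement is Conjecture~\ref{C2}, quoted by the paper from Bensmail and Stevens \cite{bensmail2014edge}; the paper does not prove it and treats it as open (the known bounds it reports are $2$ for trees and $6$ for bipartite graphs, and the paper's own Conjecture~\ref{C3} is introduced precisely as a weakening of it). So there is no proof in the paper to compare yours against, and any complete argument would be a new result rather than a reconstruction.

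Your proposal is not such an argument: it has a genuine gap exactly where you flag one. The first mechanism --- choose a star forest $E_1$ so that every component of $G-E_1$ is regular --- is asserted, not established, and it is far too strong in general: already for a tree $G$, every component of $G-E_1$ is a tree, the only regular trees are $K_1$ and $K_2$, so you would be forced to decompose $G$ into a star forest and a matching; nothing you write guarantees this, and the paper's Theorem~\ref{newt1}(i) (trees that admit no decomposition into a matching and a locally irregular subgraph) shows that closely related ``small class plus structured leftover'' schemes genuinely fail for some trees. The second mechanism --- amalgamate a bounded locally irregular decomposition down to two classes, using regular components to absorb degree collisions --- is a restatement of the open problem, and you explicitly concede you cannot control both classes simultaneously. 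The fallback induction (delete a minimum-degree vertex, two-colour the rest, reinsert its edges) also does not close the gap: reinserting an edge raises the degree of its surviving endpoint, which can violate the locally irregular condition, or the regularity, of that endpoint's component in \emph{either} class, and you provide no repair procedure and no invariant strong enough to rule this out. In short, what you have is a reasonable research plan for an open conjecture, with the decisive step left unproved.
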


How much easier is Conjecture \ref{C1} if we relax the problem
and only require that each subgraph (instead of each component)
should be  locally irregular or regular?
With this motivation in mind,
we consider  the problem of partitioning the edges of graph into subgraphs, such that each subgraph is
regular or locally irregular.
The {\it regular-irregular number} of graph $G$, denoted by $reg-irr(G)$,
is the minimum number $k$ such that the graph $G$ can be decomposed into $k$ subgraphs, such that each
subgraph is locally irregular or regular.

\begin{equation}
\chi'_{reg-irr}(G) \leq reg-irr(G) \leq \chi'_{irr}(G).
\end{equation}

Motivated by Conjecture \ref{C1} and Conjecture \ref{C2}, we present the following conjecture.
With Conjecture \ref{C3} we weaken Conjecture \ref{C1} and strengthen Conjecture \ref{C2}.

\begin{conj} \label{C3}
Each  graph can be decomposed into $3$ subgraphs, such that each
subgraph is locally irregular or regular.
\end{conj}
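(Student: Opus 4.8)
The plan is to prove the slightly stronger statement that every graph admits a decomposition into at most one regular subgraph together with at most two locally irregular subgraphs; since every regular graph already has $reg-irr(G)=1$, this gives $reg-irr(G)\le 3$ in general. I would first reduce to connected graphs. The disjoint union of locally irregular graphs is again locally irregular, so the two locally irregular classes of the individual components can be amalgamated componentwise without loss. The only genuine bookkeeping is that two regular classes coming from different components can be merged only when they share a common degree; I would handle this by normalising the extracted regular pieces to a fixed degree, so that at most one regular class survives globally.

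Next I would dispose of the exceptional connected graphs of \cite{baudon2013decomposing}, i.e. those admitting no locally irregular decomposition whatsoever, using the regular class. An odd cycle is $2$-regular and so already satisfies $reg-irr=1$; an odd path splits, by a proper $2$-edge-colouring, into two matchings, each of which is $1$-regular, giving $reg-irr\le 2$; and the remaining triangle-type exceptions can be treated by hand in the same spirit. For every \emph{non-exceptional} connected graph the inequality $reg-irr(G)\le\chi'_{irr}(G)$ together with the results of \cite{luvzar2016new} already yields a finite bound, namely $reg-irr(G)\le 220$ in general and $reg-irr(G)\le 7$ when $G$ is bipartite.

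The heart of the matter, and the step I expect to be the main obstacle, is to drive this constant down to $3$ by exploiting the single regular class as slack. Observe first that for non-exceptional $G$ Conjecture \ref{C1} would immediately give $reg-irr(G)\le\chi'_{irr}(G)\le 3$, so the extra content of Conjecture \ref{C3} beyond Conjecture \ref{C1} lies entirely in the (easy) exceptional graphs; the difficult non-exceptional case is therefore essentially Conjecture \ref{C1}, with the crucial relaxation that we may first remove one regular subgraph. Concretely, I would try to extract a well-chosen regular subgraph $R\subseteq G$ --- a $2$-factor when one is available, or a matching meeting exactly the vertices whose degrees would otherwise collide --- so that the remainder satisfies $\chi'_{irr}(G-R)\le 2$. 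Since $\chi'_{irr}\le 2$ fails for general graphs, the choice of $R$ must genuinely reshape the degree sequence of $G-R$, and proving that such an $R$ always exists and leaves a $2$-decomposable remainder appears to require ideas beyond the current locally irregular machinery. This is precisely where I expect the argument to resist a short proof.
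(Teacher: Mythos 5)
This statement is Conjecture~3 of the paper: the authors do not prove it, they only state it, motivate it via Conjectures~\ref{C1} and~\ref{C2}, record the chain $\chi'_{reg-irr}(G)\leq reg-irr(G)\leq\chi'_{irr}(G)$, exhibit a tree with regular-irregular number exactly three, and prove (Theorem~\ref{T1}) that deciding $reg-irr(G)\leq 2$ is {\bf NP}-complete. So there is no proof in the paper to compare against, and your proposal is not a proof either. Your own text concedes the decisive step: you need a regular subgraph $R\subseteq G$ such that $\chi'_{irr}(G-R)\leq 2$, you give no construction of such an $R$ and no argument that one exists, and you explicitly say this is ``precisely where I expect the argument to resist a short proof.'' As you observe, the non-exceptional case of Conjecture~\ref{C3} would follow from Conjecture~\ref{C1}, which is itself open (the best known general bound being the constant of \cite{luvzar2016new}); your route does not circumvent that difficulty, it relocates it. What you have written is a correct framing of why the conjecture is plausible and where its content lies, not a proof.

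There is also a concrete error in your reduction to connected graphs. Merging the locally irregular classes across components is fine, since a disjoint union of locally irregular graphs is locally irregular. But merging the regular classes is not: a disjoint union of an $r$-regular graph and an $s$-regular graph with $r\neq s$ is neither regular (degrees differ) nor locally irregular (adjacent vertices inside each component have equal degree), so it is not an admissible class. Your plan extracts ``a $2$-factor when one is available, or a matching'' per component, and these cannot in general be ``normalised to a fixed degree''---there is no operation that turns an extracted $2$-factor into a matching while preserving the property that the remainder of that component decomposes into two locally irregular subgraphs. To make the componentwise argument work you would have to prove the stronger uniform statement that every connected graph decomposes into one \emph{matching} (or one regular subgraph of a globally fixed degree) plus two locally irregular subgraphs, which is a strictly harder target and is likewise unproven. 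Both gaps would need to be filled before this could be considered a proof of the conjecture.
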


There are infinitely many graphs such that their regular-irregular numbers are three.
For example, consider the following tree. First, join two
vertices $v$ and $u$ by an edge. Then consider four paths of lengths 6,6,2,2  called  $P_1, P_2, P_3, P_4$,
respectively.
Identify one of the ends for each of $P_1$, $P_2$ with $v$, and identify one of the ends for each of $P_3$, $P_4$
with $u$. Call the resultant tree $\mathcal{T}$. It is easy to check that the tree $\mathcal{T}$ cannot be decomposed
into two subgraphs, such that each
subgraph is locally irregular or regular.
We show that
deciding whether a given planar bipartite graph $G$ with maximum degree three can be decomposed into at most
two subgraphs, such that each subgraph is regular or locally irregular is {\bf NP}-complete.

\begin{thm}\label{T1}
Determining whether the regular-irregular number of a
given planar bipartite graph $G$ with maximum degree three is at most two, is {\bf NP}-complete.
\end{thm}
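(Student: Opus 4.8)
First, note that the problem lies in \textbf{NP}: a certificate is a partition $E(G)=E_1\cup E_2$, and for each part one checks in linear time whether $G[E_i]$ is regular (all degrees equal) or locally irregular (the two endpoints of every edge have different degrees), so verification is polynomial. For the hardness I would reduce from a planar, occurrence-bounded variant of $3$-\textsc{Sat} (planar $3$-\textsc{Sat} in which each variable appears in at most three clauses), which is \textbf{NP}-complete and already supplies a planar variable--clause incidence structure to realize in the plane. The point is that a decomposition into two parts is nothing but a coloring of $E(G)$ by two colors so that each color class is regular or locally irregular; since $G$ is bipartite of maximum degree three, at every vertex the incident edges are split with two in one class and one in the other (or all three together), and this binary bottleneck is exactly what a \textsc{Sat}-style reduction can exploit.

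The backbone of the construction is a path \emph{wire}, whose behavior rests on a simple rigidity observation: wherever $G$ looks locally like a path, each maximal monochromatic sub-path must have at most two edges, because three consecutive edges of the same color create two adjacent degree-$2$ vertices, making that color class neither regular nor locally irregular. Hence a long path admits only tightly constrained, essentially alternating colorings, and the phase of the coloring at one end forces it at the other; a wire therefore carries a Boolean signal. I would build each \emph{variable gadget} so that it admits exactly two consistent colorings, encoding the two truth values, and would fan the value out to all clauses containing the variable through degree-$3$ branch vertices. Each \emph{clause gadget} would be designed to admit a valid $2$-coloring precisely when at least one of its three literal-wires arrives in the satisfying phase; the tree $\mathcal T$ already exhibited in the text shows that such degree-$3$ configurations can genuinely obstruct a $2$-decomposition, and this is the mechanism that enforces the clause.

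Next I would verify the three structural side conditions. Planarity is inherited from the planar incidence graph of the \textsc{Sat} instance, since the gadgets occupy disjoint regions and the wires run along the edges of a fixed planar embedding. Bipartiteness is preserved by making all wires of even length and introducing only even cycles inside the gadgets, so that a global proper $2$-vertex-coloring exists. The bound $\Delta(G)\le 3$ holds by construction, as only branch vertices ever reach degree three. The size of $G$ is linear in the size of the formula, so the reduction is polynomial, and $G$ is $2$-decomposable into regular/locally-irregular parts if and only if the formula is satisfiable.

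The heart of the argument, and the step I expect to be hardest, is the clause gadget: it must be $2$-decomposable exactly when its clause is satisfied, attach to its three wires only at degree-$3$ vertices so the global degree bound survives, and stay planar and bipartite. Because being regular or locally irregular is a \emph{global} property of a color class, the delicate issue is to keep the satisfiability constraint genuinely local: a coloring chosen inside one gadget must not accidentally let a color class become, say, $2$-regular in a way that relaxes a constraint elsewhere. Inserting short regular buffers between adjacent gadgets, so that color classes do not interact across gadget boundaries, is the kind of care the proof requires.
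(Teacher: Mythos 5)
Your NP-membership argument and the general strategy (reduce from a planar, occurrence-bounded satisfiability variant; exploit the rigidity of $2$-colorings along paths in a subcubic graph) are reasonable, but the proposal has a genuine gap: the gadgets that constitute the actual proof are never exhibited. You explicitly defer the clause gadget as ``the step I expect to be hardest,'' and you also do not give a concrete variable gadget or a fan-out mechanism; for an \textbf{NP}-hardness reduction these constructions \emph{are} the proof, and asserting that gadgets with the desired properties ``would be designed'' does not establish that they exist under the simultaneous constraints of planarity, bipartiteness, and $\Delta\le 3$. The paper does this work explicitly: it reduces from \emph{Cubic Planar 1-In-3 3-Sat} (monotone, so no negation gadgets are needed), encodes each variable by a chain of $4$-cycles $x_i x_i' x_i'' x_i'''$ and $z_i z_i' z_i'' z_i'''$ linked by the vertices $s_i, r_i$, pads degree-$2$ vertices with pendant ``dummy'' leaves to get degree set $\{1,3\}$, and verifies Properties A and B that make the truth assignment well defined and exactly-one-true.

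You also only partially address the issue you yourself flag, namely that ``regular'' is a \emph{global} property of a color class. Your proposed ``short regular buffers'' do not resolve this: a priori one class could be $2$-regular, $3$-regular, or empty, and each possibility changes the local combinatorics of your wires. The paper's key device here is to add a disjoint auxiliary tree $\mathcal{T}$ (two paths of length $4$ and two of length $2$ hung off an edge $uv$) that cannot be split into two locally irregular parts nor into two regular parts, and whose only admissible decomposition forces the regular class of the \emph{whole} graph to be $1$-regular, i.e.\ a matching. Once that is pinned down, the case analysis at every $4$-cycle reduces to degree sequences $2323$ or $3232$ and the reduction goes through. Without an analogous device your rigidity observation about monochromatic subpaths is not sufficient, because it implicitly assumes the color class containing the subpath is not allowed to be regular of degree $2$ elsewhere. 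To complete your argument you would need both the explicit gadgets and a mechanism of this kind that fixes the global type of each color class.
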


From the proof of Theorem \ref{T1}, one can obtain the following corollary.

\begin{cor} \label{cor1}
For a given planar bipartite graph $G$ with maximum degree three, deciding whether the edge set of $G$ can
be decomposed into two subgraphs $\mathcal{R}$ and $\mathcal{I}$ such that $\mathcal{R}$ is
 regular and $\mathcal{I}$ is locally irregular, is {\bf NP}-complete.
\end{cor}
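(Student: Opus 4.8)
The plan is to prove NP-completeness by reusing, essentially verbatim, the reduction constructed in the proof of Theorem~\ref{T1}, and then to certify that the relevant decompositions are of the prescribed split type. Membership in {\bf NP} is immediate: given a candidate pair $(\mathcal{R},\mathcal{I})$ with $E(G)=E(\mathcal{R})\cup E(\mathcal{I})$ a partition, one checks in polynomial time that $\mathcal{R}$ is regular (all its vertices share a single degree) and that $\mathcal{I}$ is locally irregular (the two endpoints of every edge of $\mathcal{I}$ have distinct $\mathcal{I}$-degrees).

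For hardness, let $G$ be the planar bipartite graph of maximum degree three produced by the Theorem~\ref{T1} reduction on a source instance $\Phi$. I would argue that the very same reduction works here, via the equivalence
\[
\Phi \text{ is a yes-instance} \iff G \text{ admits a (regular, locally irregular) decomposition.}
\]
The direction $(\Leftarrow)$ is free, since such a decomposition is in particular a size-two regular-or-locally-irregular decomposition, so $reg-irr(G)\le 2$ and $\Phi$ is a yes-instance by Theorem~\ref{T1}. For $(\Rightarrow)$, I would revisit the explicit decomposition exhibited on the yes-side of the Theorem~\ref{T1} reduction and verify that it already consists of exactly one regular part and one locally irregular part. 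This is precisely the sense in which the corollary is ``read off'' from the earlier proof: the construction is designed so that the only natural way to certify $reg-irr(G)\le 2$ is the restricted split.

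If the yes-side decomposition is not canonically of this form, the fallback is to exclude the two homogeneous alternatives for the specific graphs $G$ output by the reduction. A split into two regular subgraphs is obstructed by the coexistence of degree-one and degree-three vertices forced by the gadgets: a pendant edge pins its regular part to be $1$-regular, and tracing this constraint through an incident degree-three vertex (whose three edges must distribute as $3{+}0$, $2{+}1$, or $1{+}2$) contradicts the global constant-degree requirement. A split into two locally irregular subgraphs is obstructed by exhibiting an obligatory subgadget that is an exception in the sense of Baudon \emph{et al.}, hence not $2$-decomposable into locally irregular pieces. Once both homogeneous cases are ruled out, $reg-irr(G)\le 2$ forces a (regular, locally irregular) split, completing $(\Rightarrow)$; and for no-instances, $reg-irr(G)\ge 3$ already precludes any size-two decomposition whatsoever, a fortiori one of the prescribed type.

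I expect the main obstacle to be this gadget-level verification rather than anything conceptual: the split-type claim must be checked to hold for \emph{every} graph the reduction can emit, not merely for a representative example, and the degree bookkeeping that forbids a homogeneous decomposition has to be carried out uniformly across all the gadgets and their interfaces. Assuming the yes-side decomposition of Theorem~\ref{T1} is indeed a (regular, locally irregular) one, the corollary follows with no new reduction, only this structural inspection.
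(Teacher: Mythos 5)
Your proposal is correct and matches the paper's intent exactly: the paper offers no separate argument for the corollary beyond ``from the proof of Theorem~\ref{T1}'', and indeed that proof's yes-side construction explicitly produces a $1$-regular part $\mathcal{R}$ plus a locally irregular part $\mathcal{I}$, while its converse direction already begins by assuming precisely a (regular, locally irregular) decomposition, so both implications you need are read off verbatim. Your fallback case analysis is unnecessary for the main line but is consistent with the role the tree gadget $\mathcal{T}$ plays inside the Theorem~\ref{T1} proof.
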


If $T$ is a tree which is not an odd length path, then its irregular chromatic index is at most three
 and there exist infinitely many trees with irregular chromatic
index three \cite{baudon2013decomposing}. Baudon {\it et al.} proved that the problem of determining
the irregular chromatic index of a graph can be handled in linear
time when restricted to trees  and if $T$ is a tree with $\Delta(T)> 4$, then its
irregular chromatic index is at most two \cite{bensmail2013complexity}.
Afterwards, Bensmail and Stevens proved that if $T$ is a tree, then its regular-irregular chromatic index is at most two
\cite{bensmail2014edge}.

Here, for every $k>2$, we construct a tree $T$ with $\Delta(T)= k$ such that $T$ cannot be decomposed into a matching and
a locally irregular subgraph and also, we show that every tree
can be decomposed into two matchings and a locally irregular subgraph.

\begin{thm}\label{newt1}\\
(i) For every $k>2$, there is a tree with $\Delta(T)= k$ such that $T$ cannot be decomposed into a matching and
a locally irregular subgraph.\\
(ii) Every tree can be decomposed into two subgraphs
$\mathcal{P}$ and $\mathcal{R}$ such that $\mathcal{R}$ is
a matching and each component of $\mathcal{P}$ is an edge or a locally irregular component.\\
(iii) Every tree can be decomposed into two matchings and a locally irregular subgraph.
\end{thm}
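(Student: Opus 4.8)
My plan is to establish (ii) directly, to read (iii) off from it, and to handle (i) by an explicit construction. It is convenient to fix notation for a decomposition $E(T)=\mathcal{R}\sqcup\mathcal{P}$ with $\mathcal{R}$ a matching: set $x_v=1$ if $v$ is covered by $\mathcal{R}$ and $x_v=0$ otherwise, so that $d_{\mathcal{P}}(v)=d_T(v)-x_v\in\{d_T(v),d_T(v)-1\}$. Every component of $\mathcal{P}$ is an edge or locally irregular if and only if no $\mathcal{P}$-edge $uv$ has $d_{\mathcal{P}}(u)=d_{\mathcal{P}}(v)\ge 2$; I call such an edge \emph{bad}. In particular every edge meeting a vertex of $\mathcal{P}$-degree at most $1$ (e.g. any edge at a leaf) is automatically good.

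For (ii) I would induct on $|V(T)|$. After disposing of the base cases (a single edge, or a star $K_{1,t}$, where the empty matching already works), I pick a deepest internal vertex $s$: all its children $\ell_1,\dots,\ell_t$ are leaves, it has a parent $p$, and $d_T(s)=t+1$. Put $T'=T-\{\ell_1,\dots,\ell_t\}$, in which $s$ is a leaf, and take a valid decomposition $\mathcal{R}',\mathcal{P}'$ of $T'$ from the induction hypothesis. I keep every old label and assign labels only to the new edges $s\ell_i$. Since $s$ is the only vertex whose degree changes and its unique $T'$-edge is $sp$, the sole old edge that can turn bad is $sp$; and each new edge $s\ell_i$ meets the leaf $\ell_i$, hence is good as long as $\mathcal{R}$ stays a matching.

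Thus the inductive step is just to repair $sp$. If $sp\in\mathcal{R}'$ then $s$ is already matched, so I place all $s\ell_i$ in $\mathcal{P}$; then $sp\notin\mathcal{P}$ and nothing needs checking. If $sp\in\mathcal{P}'$ then $s$ is free and I can realize either $d_{\mathcal{P}}(s)=t+1$ (all $s\ell_i$ in $\mathcal{P}$, $s$ unmatched) or $d_{\mathcal{P}}(s)=t$ (match $s$ to $\ell_1$, the rest in $\mathcal{P}$); as $d_{\mathcal{P}}(p)$ is a fixed number and $t\ne t+1$, at least one choice gives $d_{\mathcal{P}}(s)\ne d_{\mathcal{P}}(p)$, so $sp$ is good, and in both choices $\mathcal{R}$ remains a matching. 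This would complete the induction and prove (ii). Statement (iii) then follows at once: in the decomposition produced for (ii), let $\mathcal{R}_2$ be the union of the single-edge components of $\mathcal{P}$ and $L$ the union of its locally irregular components; distinct components are vertex-disjoint, so $\mathcal{R}_2$ is a matching and $L$ is locally irregular, whence $E(T)=\mathcal{R}\sqcup\mathcal{R}_2\sqcup L$ is the required decomposition into two matchings and a locally irregular subgraph.

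For (i) I would exhibit, for each $k>2$, an explicit tree $T_k$ with $\Delta(T_k)=k$ and argue by contradiction. The only way in which (i) is stronger than (ii) is that a genuinely locally irregular subgraph has no $K_2$-component, so a decomposition as in (i) must in addition forbid every $\mathcal{P}$-edge $uv$ with $d_{\mathcal{P}}(u)=d_{\mathcal{P}}(v)=1$; the construction is designed to make such a $1$-$1$ edge unavoidable. The engine is the run structure forced along a path of degree-$2$ vertices: in any valid decomposition each maximal block of consecutive $\mathcal{P}$-edges along such a path must have length exactly $2$ (a length-$1$ block ending at a leaf is precisely a forbidden $1$-$1$ edge, and a block of length $\ge 3$ contains two adjacent vertices of $\mathcal{P}$-degree $2$), and these blocks are separated by single matching edges. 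I would attach to one vertex of degree $k$ several paths whose lengths are tuned modulo $3$ so that the matching is forced along them and the parities collide, leaving no way to terminate the runs without creating a $1$-$1$ edge at some leaf. The hard part is exactly this last point: a degree-$k$ hub enjoys extra freedom (an incident $\mathcal{P}$-edge at the hub never forces its endpoints apart in the dangerous way, since the hub has large $\mathcal{P}$-degree), so the tree must be arranged, and a somewhat delicate case analysis carried out, to show that this freedom cannot be used to escape the conflict for any choice of matching.
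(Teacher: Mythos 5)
Your treatment of (ii) and (iii) is correct and is essentially the paper's own argument: an induction on the tree that repairs the single parent edge $sp$ by exploiting the fact that the two achievable values $t$ and $t+1$ for $d_{\mathcal{P}}(s)$ cannot both coincide with the fixed value $d_{\mathcal{P}}(p)$, followed by splitting $\mathcal{P}$ into its single-edge components (a matching, since components are vertex-disjoint) and its locally irregular components. The paper deletes one deepest edge at a time and cases on $d_{\mathcal{R}'}(u)$, whereas you delete all leaf children of a deepest internal vertex at once; this is marginally cleaner but the mechanism is identical, and your reformulation of the target condition (no $\mathcal{P}$-edge $uv$ with $d_{\mathcal{P}}(u)=d_{\mathcal{P}}(v)\ge 2$) is valid.

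Part (i), however, is a plan rather than a proof: you exhibit no tree $T_k$, and you explicitly defer the decisive step --- showing that the degree-$k$ hub's extra freedom cannot rescue the decomposition --- to ``a somewhat delicate case analysis'' that is never carried out. This is precisely where attaching tuned paths directly to the hub is problematic, since, as you note, a $\mathcal{P}$-edge incident to a high-degree vertex is never itself dangerous, so the mod-$3$ run arithmetic along a pendant path can always be terminated harmlessly at the hub end. The paper avoids any analysis at the hub by a two-level construction: a gadget $\mathcal{T}'$ (two adjacent degree-$3$ vertices $v,u$, with two pendant paths of length $3$ at $v$, and one pendant path of length $3$ plus one pendant edge at $u$) has a distinguished leaf $b$ adjacent to $u$, and the internal structure of $\mathcal{T}'$ forces the edge joining $b$ to the outside into the matching in any decomposition into a matching and a locally irregular subgraph. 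Joining a new vertex $z$ to the distinguished leaf of each of $k$ disjoint copies then forces $k\ge 3$ matching edges at $z$, an immediate contradiction, with all the forcing happening inside the gadgets and none at $z$. Until you specify a concrete tree and verify the forcing, part (i) has a genuine gap.
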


In \cite{bensmail2013complexity}, Baudon {\it et al.} proved that determining whether a given planar graph $G$,
can be decomposed into two
locally irregular subgraphs is {\bf NP}-complete.
But their reduction does not preserve the planarity. In this paper, by another reduction, we show
 that determining whether a given planar graph $G$,
can be decomposed into two locally irregular subgraphs is {\bf NP}-complete.

\begin{thm}\label{T2} Determining whether the irregular chromatic index of a
given planar  graph $G$ is at most two, is {\bf NP}-complete.
\end{thm}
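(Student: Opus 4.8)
The plan is to prove both containment in \textbf{NP} and \textbf{NP}-hardness, with essentially all the work being in the latter. Membership is routine: a certificate for Theorem~\ref{T2} is simply a bipartition $(E_1,E_2)$ of $E(G)$, and to verify it one computes, inside each subgraph $G[E_i]$, the degree $d_i(v)$ of every vertex and then checks across every edge $uv\in E_i$ whether $d_i(u)\neq d_i(v)$. This is linear in the size of $G$, so deciding $\chi'_{irr}(G)\le 2$ lies in \textbf{NP} even without the planarity restriction.

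For hardness the obstacle is exactly the one flagged in the excerpt: the earlier reduction establishing that ``$\chi'_{irr}\le 2$'' is \textbf{NP}-complete for general graphs introduces edge crossings and so does not certify planarity. I would therefore give a planarity-preserving reduction from a planar \textbf{NP}-complete problem, the natural candidate being Planar~3-SAT (Lichtenstein), whose variable--clause incidence graph admits a planar embedding that dictates the layout of the target graph. (An alternative is to planarize the known general reduction directly by splicing a crossover gadget at each crossing; I would fall back on this if a clean reduction from Planar~3-SAT proved elusive, but the gadget analysis is the same either way.) The heart of the construction is three families of \emph{planar} gadgets: a variable gadget whose only admissible $2$-decompositions into locally irregular subgraphs fall into two classes, encoding the Boolean values of that variable; wire gadgets that propagate a chosen value faithfully along a path between a variable and a clause; and a clause gadget that admits a valid decomposition precisely when at least one incident literal wire delivers the satisfying value.

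Concretely, I would exploit the fact that long paths and regular fragments are the canonical ``forbidden'' monochromatic pieces: a monochromatic path on three internal degree-$2$ vertices, or a monochromatic regular stretch, already violates local irregularity, so the colour of one edge forces constraints on its neighbours, and this is what I would tune so that the forced options along a wire are exactly ``$E_1$-then-$E_2$'' versus ``$E_2$-then-$E_1$,'' i.e.\ a binary signal. Gluing the gadgets along shared ports according to the planar embedding of the formula produces a planar graph $G_\Phi$, and I would prove the equivalence ``$G_\Phi$ decomposes into two locally irregular subgraphs $\iff$ $\Phi$ is satisfiable'' by reading satisfying assignments off of consistent colourings and vice versa. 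To keep $G$ planar and bounded in a controlled way one can subdivide edges and attach pendant paths without disturbing the decomposition logic.

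The main difficulty will be the gadget design and its verification, and it has two sources. First, local irregularity is a \emph{nonlocal}, degree-based condition: when two gadgets are joined at a vertex, that vertex's degree in each colour class changes, so the behaviour forced for a gadget in isolation must be shown to survive the gluing; the standard remedy is to pad ports with pendant vertices or subdivisions chosen so that the junction degrees never accidentally coincide, but getting these degrees simultaneously right across all three gadget types is delicate. Second, realizing a genuine planar crossover --- a device that transmits two independent signals past one another without leaking information between them --- while ensuring that the \emph{only} admissible $2$-decompositions of the crossover are the information-consistent ones requires an exhaustive case analysis on the degrees at the crossing vertices. Once these gadget lemmas are in hand, assembling $G_\Phi$ and arguing the equivalence is bookkeeping; the entire risk of the proof concentrates in the construction and correctness of these planar gadgets.
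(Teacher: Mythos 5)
Your overall strategy---planarity-preserving reduction from a planar satisfiability variant, with variable, wire and clause gadgets whose admissible $2$-decompositions encode a truth assignment---is the right one, and your NP-membership argument is fine. But the proposal is a plan rather than a proof: every piece where the theorem could actually fail (the variable gadget with exactly two admissible decompositions, the clause gadget, the gluing analysis) is deferred, and you say so yourself. That alone would be a gap, but there is a more specific problem with the two concrete routes you name. First, a clause gadget for standard Planar 3-SAT semantics (``at least one true literal'') is not obviously realizable here: a $2$-decomposition into locally irregular subgraphs is governed by \emph{degree-balance} conditions at each vertex, which naturally enforce symmetric, exact-count constraints rather than asymmetric threshold constraints. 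The paper sidesteps this by reducing from \emph{Monotone Planar 2-In-4 4-Sat} (Kara et al.), whose ``exactly two of four literals true'' semantics matches a forced even split of the edges at the clause vertex; its clause gadget $\mathcal{B}_c$ is built precisely so that exactly two of the four literal edges land in each part. You would need either to switch to such a source problem or to exhibit an ``at least one'' gadget, and the latter is where your proof could genuinely die.

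Second, your fallback of splicing crossover gadgets into the known non-planar reduction is not a safe retreat. For constraint problems of this parity/balance flavour, planar crossovers sometimes provably do not exist and the planar restriction drops into \textbf{P} (Planar NAE-3-SAT is the standard cautionary example), so ``the gadget analysis is the same either way'' is not justified. The paper avoids crossovers entirely: because the incidence graph of a Monotone Planar 2-In-4 4-Sat instance is already planar, the variable gadgets $\mathcal{A}_{\gamma(x)}(x)$ (a cycle of $\mathcal{L}$-gadgets with one degree-one \emph{important vertex} per clause occurrence, all of whose pendant edges are forced into the same part) and the clause gadgets $\mathcal{B}_c$ can be wired together along the given planar embedding with no crossings at all. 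To repair your proposal, commit to a planar source problem whose clause semantics you can actually enforce by degree counting, and then construct and verify the gadgets; as written, the argument has not yet engaged with the part of the theorem that is hard.
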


There is an interesting connection between an edge-labeling which
is an additive vertex-coloring and the irregular chromatic index of regular graphs.

\begin{remark} \label{R1}
Karo\'nski, \L{}uczak and Thomason
initiated the study of edge-labelings which
give additive vertex-colorings. That means for every edge
$uv$, the sum of labels of the edges incident to $u$ is different
from the sum of labels of the edges incident to $v$ \cite{MR2047539}.
Dudek and Wajc
 showed that determining whether a given graph has an edge-labeling which
is an additive vertex-coloring from $\{1,2\}$  is
$ \mathbf{NP} $-complete \cite{David}. Afterwards, Ahadi {\it et al.} proved that determining whether a given 3-regular graph $G$ has
an edge-labeling which
is an additive vertex-coloring from $\{1,2\}$  is
$ \mathbf{NP} $-complete \cite{MR3072733}. For a given 3-regular graph $G$, it is easy to see that the graph $G$  has an edge-labeling which
is an additive vertex-coloring from $\{1,2\}$ if and only if the edge set of the graph $G$ can be decomposed
 into at most two locally irregular subgraphs. Thus, for a given 3-regular graph $G$, deciding whether $\chi'_{irr}(G)=2$
 is {\bf NP}-complete \cite{MR3072733}.
\end{remark}

\subsection{Regular graphs}

The edge set of every graph can be decomposed
such that the subgraph induced by each subset is regular (to obtain a trivial upper bound consider the case that each subgraph is a matching).
In 2001, Kulli {\it et al.} introduced the {\it regular number} of  graphs \cite{kulli}.
The {\it regular number} of a graph $G$, denoted by $reg(G)$, is the minimum number of subsets into which the
edge set of the graph  $G$ can be decomposed
so that the subgraph induced by each subset is regular.
Nonempty subsets $E_1, \ldots ,E_r$ of $E(G)$ are said to form a regular
decomposition of the graph $G$ if the subgraph induced by each subset is regular.
The  edge chromatic number
of a
graph, denoted by $\chi '(G)$, is the minimum size of a decomposition of the edge set into $1$-regular subgraphs.
By Vizing's theorem  the edge chromatic number of a graph $G$ is equal to either $ \Delta(G) $ or $ \Delta(G) +1 $ (see \cite{MR1367739}, page 197).
Hence, the regular number problem is a generalization of the  edge chromatic number and we have the following:

\begin{equation}
 reg(G)\leq \chi '(G) \leq \Delta(G) +1.
\end{equation}

Determining whether  $reg(G)\leq \Delta(G)$ holds for all
connected graphs was posed an open problem in \cite{kulli2}.
It was shown   that not only  there exists a counterexample for the above bound but also for a given connected graph $G$ deciding whether $reg(G)\leq \Delta(G)$ is
{\bf NP}-complete \cite{reg}.
Designing an algorithm to decompose a given bipartite graph into the minimum number
of regular subgraphs was posed as another problem in \cite{kulli2}.
But, it was proved that computation of the regular number is {\bf NP}-hard for connected bipartite graphs. Also,
it was proved that deciding whether $reg(G) = 2$ for a given connected 3-colorable
graph $G$ is {\bf NP}-complete \cite{reg}.
Here, we improve the previous results and show that for a given bipartite graph $G$ with maximum degree six,
determining whether $ reg(G) = 2 $  is {\bf NP}-complete. Furthermore, we show that
there is polynomial time algorithm to decide whether $ reg(G) = 2 $ for a given graph $ G $
with maximum degree five.

\begin{thm}\label{T3}\\
(i) For every number $\alpha\geq 3$,
determining whether $ reg(G) = 2 $ for a given bipartite graph $G$ with degree set $\{\alpha, 2\alpha\}$,
is {\bf NP}-complete.\\
(ii) There is polynomial time algorithm to decide whether $ reg(G) = 2 $ for a given graph $ G $
with maximum degree five.
\end{thm}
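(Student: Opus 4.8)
The plan is to drive both parts from a single structural reformulation. Suppose a graph $G$ with degree set $\{\alpha,2\alpha\}$ is decomposed into two regular subgraphs $H_1$ and $H_2$, say $r_1$-regular and $r_2$-regular. Since each vertex $v$ lies in $H_1$ only, in $H_2$ only, or in both, its degree in $G$ must be one of $r_1$, $r_2$, or $r_1+r_2$. Ruling out the degenerate situation in which $V(H_1)$ and $V(H_2)$ are disjoint (which forces $G$ to split into an $\alpha$-part and a $2\alpha$-part with no edges between them), one is left with $r_1=r_2=\alpha$. Hence a decomposition witnessing $reg(G)=2$ is exactly a $2$-colouring of $E(G)$ that is \emph{monochromatic} at every degree-$\alpha$ vertex and \emph{balanced} (exactly $\alpha$ edges of each colour) at every degree-$2\alpha$ vertex. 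Equivalently, I will regard the colour-$1$ edges as a subgraph $H_1$ carrying the prescribed degree list $\{0,\alpha\}$ at the $\alpha$-vertices and $\{\alpha\}$ at the $2\alpha$-vertices; this is a \emph{general factor} instance whose complement is automatically $\alpha$-regular.

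For part (i), membership in $\mathbf{NP}$ is immediate: guess the bipartition of $E(G)$ and check regularity of both parts in polynomial time. For hardness I would give a gadget reduction from a suitable $\mathbf{NP}$-complete constraint problem (monotone NAE-$3$SAT is a convenient source), using the monochromatic constraint at the $\alpha$-vertices to encode truth values and the balance constraint at the $2\alpha$-vertices to encode clauses. The basic primitive, available for every $\alpha$, is a \emph{bundle} of $\alpha$ edges forced to share a colour; variable gadgets broadcast a single colour through such bundles, while clause gadgets attach bundles of suitable sizes to degree-$2\alpha$ vertices so that the $\alpha$-of-each-colour requirement is satisfiable precisely when the clause is not monochromatic. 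All pieces are assembled from even cycles and trees so that the resulting graph is bipartite and every vertex receives degree exactly $\alpha$ or $2\alpha$. The reduction must be carried out uniformly in $\alpha$, and one must certify that no unintended decomposition---in particular the degenerate $r_1\neq r_2$ case or an accidentally monochromatic gadget---produces a spurious solution; verifying this gadget correctness simultaneously for all $\alpha\ge 3$ is the main obstacle.

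For part (ii) the same reformulation yields a polynomial algorithm. When $\Delta(G)\le 5$ there are only constantly many candidate pairs $(r_1,r_2)$ with $1\le r_1\le r_2\le 5$, so I would enumerate them all. For each pair the category of every vertex, hence its target degree in $H_1$, is forced: if $r_1\neq r_2$ the three values $r_1$, $r_2$, $r_1+r_2$ are distinct, so each observed degree determines a single admissible $H_1$-degree and we obtain an \emph{exact} $f$-factor instance, solvable in polynomial time by Tutte's reduction to matching; if $r_1=r_2=r$, a degree-$2r$ vertex must be present for $G$ to be non-regular, forcing $2r\le 5$ and thus $r\le 2$, so the degree-$r$ vertices carry the list $\{0,r\}$ and the degree-$2r$ vertices the list $\{r\}$, and every such list is an interval or has all gaps of length at most one, whence the instance is solved in polynomial time by Cornu\'ejols' general-factor algorithm. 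Taking the disjunction over all candidate pairs decides $reg(G)\le 2$, and a final check that $G$ is not regular separates $reg(G)=2$ from $reg(G)=1$.

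A unifying remark I would add is that the two parts meet exactly at the boundary between tractable and intractable general factors: the list $\{0,\alpha\}$ has a gap of length $\alpha-1$, which is at most one precisely when $\alpha\le 2$. The first hard case $\alpha=3$ produces degree set $\{3,6\}$ and maximum degree $6$, one above the threshold $\Delta=5$ of the polynomial algorithm, so the two results are complementary and tight with respect to the maximum degree.
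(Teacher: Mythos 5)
Your reformulation of the problem is sound, and your part (ii) is essentially the paper's argument: enumerate the constantly many candidate pairs $(r_1,r_2)$, observe that each pair forces the target $H_1$-degree of every vertex (up to the list $\{0,r\}$ when $r_1=r_2=r$), and solve the resulting degree-constrained subgraph problems by matching techniques. The paper works through the cases $\Delta(G)=5,4,3$ explicitly with $2$-factors and perfect matchings; your appeal to exact $f$-factors and to the general-factor algorithm for lists with all gaps at most one covers the same ground and is, if anything, stated more uniformly. That part stands.

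Part (i), however, has a genuine gap: the hardness reduction is never actually constructed. You correctly identify that a witness for $reg(G)=2$ is a $2$-colouring of $E(G)$ that is monochromatic at degree-$\alpha$ vertices and $\alpha$-balanced at degree-$2\alpha$ vertices, and you correctly name NAE-satisfiability as a natural source problem (the paper reduces from Cubic Monotone NAE $(2,3)$-Sat). But the entire technical content of the proof is the family of gadgets, parametrised uniformly in $\alpha$, that realises the clause constraint while keeping the graph bipartite with degree set exactly $\{\alpha,2\alpha\}$ --- and you explicitly defer this (``verifying this gadget correctness simultaneously for all $\alpha\ge 3$ is the main obstacle''). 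The paper resolves exactly that obstacle with concrete constructions: a clause gadget built from two copies of $K_{\alpha,\alpha}$ with a partial matching removed and an apex vertex $a_c$ of internal degree $2\alpha-3$, so that any regular decomposition necessarily places $\alpha-1$ of $a_c$'s internal edges in one part and $\alpha-2$ in the other, forcing exactly one or two of the three literal edges at $a_c$ into each part --- which is precisely the NAE condition; an analogous gadget with internal degree $2\alpha-2$ handles the $2$-literal clauses, and a $K_{\alpha,\alpha}$-based variable gadget broadcasts the truth value. Without some such explicit construction and its verification (in particular, ruling out unintended decompositions), part (i) remains a plan rather than a proof.
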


Also, we consider the problem of determining the regular number for planar graphs. Note that
every planar graph $G$ with degree set $\{2, 4\}$ can be decomposed into two regular subgraphs (see the proof of part (ii)
of Theorem \ref{T3}).

\begin{thm}\label{T4}
 Determining whether $ reg(G) = 2 $ for a given planar graph $G$ with degree set $\{3, 6\}$,
is {\bf NP}-complete.
\end{thm}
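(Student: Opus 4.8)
The plan is to prove membership in \textbf{NP} and then to give a polynomial reduction from a planar \textbf{NP}-complete problem. Membership is immediate: a partition $E_1,E_2$ of $E(G)$ is a certificate of polynomial size, and checking that each $G[E_i]$ is regular takes linear time. For the hardness, my first step is to pin down the structure of any decomposition of a graph $G$ with degree set $\{3,6\}$ into two regular subgraphs $G_1$ (which is $r_1$-regular) and $G_2$ (which is $r_2$-regular). Writing, for each vertex $v$, its part-degrees as $d_1(v)+d_2(v)=d(v)$ with $d_i(v)\in\{0,r_i\}$ and $d(v)\in\{3,6\}$, a short case analysis on the attainable sums $\{0,r_1,r_2,r_1+r_2\}$ shows that the only feasible pairs are $\{r_1,r_2\}=\{3,3\}$ and $\{r_1,r_2\}=\{3,6\}$. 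In the latter case every degree-$3$ vertex must lie entirely in the $3$-regular part and every degree-$6$ vertex entirely in the $6$-regular part, so a single edge joining a degree-$3$ vertex to a degree-$6$ vertex already makes that case impossible. Since the graphs I build always contain such edges, any valid decomposition has $r_1=r_2=3$, and hence $reg(G)=2$ becomes equivalent to the existence of a red/blue colouring of $E(G)$ in which every degree-$3$ vertex is monochromatic and every degree-$6$ vertex is \emph{balanced}, i.e.\ meets exactly three red and three blue edges.

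I would reduce from the classical Planar $3$-SAT problem. The symmetry of the target under globally swapping the two colours is handled by working with \emph{relative} truth values: one fixes a reference edge and declares a wire \textsc{true} when it carries the reference colour and \textsc{false} otherwise, a notion that is invariant under the global colour swap. A variable is realised as a cycle of degree-$3$ vertices (a ``bus''): since adjacent degree-$3$ vertices must agree, all edges of the cycle share one colour, and at each bus vertex the third incident edge taps off a wire carrying the variable's value. Negation is realised by a small gadget built around a degree-$6$ vertex whose four auxiliary edges are pre-constrained to two red and two blue, so that the balance condition forces the input and output wires to receive opposite colours. All gadgets use only degrees $3$ and $6$, and wires can be lengthened by further degree-$3$ (and balanced degree-$6$) vertices so as to reach wherever they are needed.

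Because Planar $3$-SAT comes with a planar drawing of its variable--clause incidence graph, the buses, wires, negation gadgets and clause gadgets can be laid out following this drawing, so the resulting $G$ is planar with degree set $\{3,6\}$ and no crossover gadget is required. Correctness then amounts to two routine verifications: a satisfying assignment yields a balanced/monochromatic colouring by colouring each bus according to its variable and propagating through the gadgets, and conversely any such colouring reads off a satisfying assignment via the relative truth values. The clause gadget is built around one or more degree-$6$ vertices whose balance condition is satisfiable precisely when at least one of the three incident literal-wires is \textsc{true}, with the remaining incident edges absorbed by auxiliary structure.

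The main obstacle, and where the genuine work lies, is the design of this clause gadget: one must enforce the logical OR using only the ``exactly three of six'' balance constraint together with monochromatic degree-$3$ vertices, while keeping the gadget planar, restricting every degree to $\{3,6\}$, and, crucially, ensuring that it admits no spurious colouring that would validate an unsatisfiable clause. A companion technical point is to check that the pre-constraints used in the negation and clause gadgets (forcing certain auxiliary edges to fixed relative colours) are themselves realisable with degrees $3$ and $6$; once these local gadgets are in hand, assembling them along the planar formula and carrying out the two-way correctness argument is straightforward.
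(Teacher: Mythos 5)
Your setup is sound and agrees with the paper up to the point where the real work begins: membership in \textbf{NP}, the case analysis forcing $r_1=r_2=3$, the observation that every degree-$3$ vertex is monochromatic and every degree-$6$ vertex must meet exactly three edges of each part, and the variable ``bus'' realised as a cycle of degree-$3$ vertices are all essentially what the paper does. The gap is the clause gadget, which you yourself flag as ``where the genuine work lies'' and then do not construct. This is not a routine omission: the only primitive constraints available are ``all six incident edges split $3/3$'' and ``all incident edges agree,'' both of which are invariant under the global colour swap and are of \emph{exact-count} type. A plain disjunction (``at least one of three literal wires is true'') is a poor fit for such primitives --- e.g.\ a degree-$6$ clause vertex with three literal wires and three free auxiliary wires is still balanced when all three literals are false, so it does not enforce OR, and pre-forcing the auxiliaries instead yields a $1$-in-$3$ or $2$-in-$4$ type constraint rather than OR. Without an explicit gadget and a proof that it admits no spurious colouring, the reduction from Planar $3$-SAT does not go through; you would also need a crossover-free treatment of negations, which your source problem requires.

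The paper avoids all of this by choosing its source problem to match the available constraint exactly: it reduces from \emph{Monotone Planar 2-In-4 4-Sat} (Kara et al.). There the clause vertex $a_c$ has degree $6$, four of its edges go to the four (unnegated) literal cycles, and the remaining two edges lie inside a gadget $\mathcal{Z}_c$ that forces them into different parts; the $3/3$ balance at $a_c$ then says precisely ``exactly two of the four literal edges are in $G_1$,'' which is the 2-in-4 clause condition verbatim. Monotonicity removes the need for any negation gadget, and the self-complementary nature of 2-in-4 satisfaction disposes of the colour-swap symmetry without introducing reference edges. If you want to salvage your outline, switch your source problem to a planar, monotone, exact-count SAT variant (2-in-4, or 1-in-3 with a degree-$6$ vertex whose three auxiliary edges are pre-forced $2/1$); as written, the proposal is missing the one construction the theorem actually turns on.
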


\subsection{Locally regular graphs}

We say that a graph $G$ is {\it locally regular} if each component of $G$ is regular (Note that a regular graph is locally regular
but the converse does not hold).
The {\it  regular chromatic index} of a graph $G$ denoted by $\chi'_{reg}$ is the minimum number of subsets into which the
edge set of $G$ can be decomposed so that the subgraph induced by each subset is locally regular.
From the definitions of locally regular and regular graphs we have the following bound.

\begin{equation}
\chi'_{reg}(G) \leq reg(G)\leq  \Delta(G)+1.
\end{equation}

It was shown that determining whether $reg(G)\leq \Delta(G)$ for a given
connected graph $ G $  is {\bf NP}-complete \cite{reg}. Here, we show that every graph $G$ can be decomposed into $\Delta(G)$
subgraphs such that each subgraph is locally regular.
 We use the concept of semi-coloring to prove that fact. Daniely and Linial defined a semi-coloring of graphs for the investigation of the
tight product of graphs \cite{daniely2012tight}. Afterwards, Furuya {\it et al.} proved that every graph has
a semi-coloring \cite{furuya2014existence}.

\begin{thm}\label{newt2}
For every graph $G$, $\chi'_{reg}(G)\leq \Delta(G)$ and this bound is sharp for trees.
\end{thm}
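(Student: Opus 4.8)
The statement has two halves: the upper bound $\chi'_{reg}(G)\le\Delta(G)$ for every $G$, and the existence of trees attaining equality. I would treat them separately.

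For the upper bound my plan is induction on $\Delta:=\Delta(G)$. If $\Delta\le 1$ then $G$ is a matching together with isolated vertices, which is itself locally regular, so $\chi'_{reg}(G)\le 1$. For the inductive step, write $S=\{v\in V(G):\deg_G(v)=\Delta\}$ for the set of maximum-degree vertices and suppose I can peel off a single locally regular subgraph $F\subseteq G$ that meets every vertex of $S$, i.e. $\deg_F(v)\ge 1$ for all $v\in S$. Removing $E(F)$ then strictly lowers the degree of each maximum-degree vertex while never raising any degree, so $\Delta(G-E(F))\le\Delta-1$; by induction $G-E(F)$ decomposes into at most $\Delta-1$ locally regular subgraphs, and appending $F$ gives $\chi'_{reg}(G)\le\Delta$. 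Thus everything reduces to a covering lemma.

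\textbf{Covering Lemma.} Every graph $G$ with $\Delta(G)\ge 1$ has a locally regular subgraph $F$ with $\deg_F(v)\ge 1$ for every maximum-degree vertex $v$. This is the crux, and it is exactly where I would invoke the existence of a semi-coloring of $G$ \cite{furuya2014existence}. The naive attempt, taking $F$ to be a matching saturating $S$, fails: an $S$-saturating matching need not exist, so one is forced to build $F$ out of genuine regular components (single edges together with cycles, or larger regular pieces) rather than a matching alone. The semi-coloring of Furuya \emph{et al.} is precisely the device that organizes the edges around the high-degree vertices into such components, and converting its output into a locally regular subgraph covering $S$ is the step I expect to require the most care. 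The point that makes the lemma believable is that the vertices of $S$ carry the \emph{largest} possible degree, so there is ample room to assign each of them to a regular component without being trapped in a star.

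For sharpness I would use stars. In the star $K_{1,n}$ we have $\Delta=n$, and any subset of its edges induces a substar $K_{1,t}$, which is regular only when $t\le 1$. Hence each class of a regular-chromatic decomposition contains at most one edge of $K_{1,n}$, forcing at least $n$ classes, so $\chi'_{reg}(K_{1,n})=n=\Delta(K_{1,n})$. Since $K_{1,n}$ is a tree, this exhibits trees meeting the bound with equality for every value of $\Delta$, which together with the upper bound shows that no tree can do better and that $\Delta(G)$ cannot be replaced by any smaller function already on trees.
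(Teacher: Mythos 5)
The sharpness half of your proposal is correct: any two edges of $K_{1,n}$ meet at the centre and form a connected non-regular subgraph, so each class of a locally regular decomposition of a star is a single edge, giving $\chi'_{reg}(K_{1,n})=n=\Delta$. The paper actually proves more with less: a connected regular acyclic graph is $K_1$ or $K_2$, so \emph{every} class of \emph{every} locally regular decomposition of \emph{any} tree is a matching, whence $\chi'_{reg}(T)=\chi'(T)=\Delta(T)$ for all trees, not just stars.

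The upper bound, however, has a genuine gap. You reduce the whole theorem to your Covering Lemma --- a vertex-disjoint family of regular components (edges and cycles) meeting every maximum-degree vertex --- and then do not prove it; you only assert that the semi-coloring of Furuya et al.\ is ``precisely the device'' and that the conversion ``will require the most care.'' That conversion is the entire content of the theorem, and the route you gesture at does not obviously deliver it: a semi-coloring on colour set $[\Delta]$ yields (this is the paper's argument) a decomposition $\mathcal{P}_1,\dots,\mathcal{P}_\Delta$ with $\mathcal{P}_i=\{e:\ell(e)=\{i\}\}\cup\{e:\ell(e)=\{i,j\},\ i<j\}$, in which condition (2) forces the $\{i,j\}$-labelled edges to induce disjoint cycles and hence every component of every $\mathcal{P}_i$ to be an edge or a cycle. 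But a single class $\mathcal{P}_j$ need not touch a given maximum-degree vertex $v$: the colour-$j$ weight at $v$ may be carried by two edges labelled $\{i,j\}$ with $i<j$, which land in $\mathcal{P}_i$, leaving $\mathcal{P}_j$ empty at $v$. So you cannot simply extract one class as your $F$, and your induction never gets off the ground. The paper needs no induction and no covering lemma: it reads off all $\Delta$ locally regular classes at once from the semi-coloring. If you insist on the peeling scheme, you must actually prove the Covering Lemma from scratch (note that a matching alone cannot suffice --- in $K_3$ all three vertices have maximum degree and no matching covers them, so cycles are genuinely needed), and that proof is where all the difficulty lives.
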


\begin{remark}
The difference between the regular number and the regular chromatic index of a graph can be arbitrary large. For
a fixed $t$, consider a copy of the complete graphs $K_1, \ldots, K_t$. For each $i$, $i=1,\ldots, t-1$, join one of the vertices
of the complete graph $K_i$ to one of the vertices of the complete graph $K_{i +1}$. Call the resulting graph $G$. If we put the set of
edges of complete graphs $K_1, \ldots, K_t$ in one subgraph and the other edges of $G$ in an another subgraph, we obtain an
edge decomposition of $G$ into two
locally regular subgraphs, so $\chi'_{reg}(G)\leq 2$. On the other hand, since $G$ has $t + \mathcal{O}(1) $
different numbers in its degree
set, $reg(G)\geq \lg t$.
\end{remark}

Suppose that $G$ is a connected graph with degree set $\{\alpha, 2\alpha\}$ and the induced graph on the
set of vertices of degree $2\alpha$ forms an independent set. It is easy to check that $\chi'_{reg}(G)=2$
if and only if $reg(G) = 2$ (since the induced graph on the
set of vertices of degree $2\alpha$ forms an independent set, so in each decomposition every component is $\alpha$-regular).
Thus, by the proof of Theorem \ref{T3} and Theorem \ref{T4}, we have the following
corollary.

\begin{cor}\label{cor2}
{\\
(i) Determining whether $\chi'_{reg}(G)=2$ for a given bipartite graph $G$ with maximum degree
six, is {\bf NP}-complete.\\
(ii) Determining whether $\chi'_{reg}(G)=2$ for a given planar graph $G$ with degree set
$\{3, 6\}$, is {\bf NP}-complete.
}\end{cor}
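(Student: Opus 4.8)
The plan is to derive both parts directly from the two NP-completeness results already established for the regular number, namely Theorem~\ref{T3}(i) specialized to $\alpha = 3$ and Theorem~\ref{T4}, by invoking the equivalence recorded immediately before the corollary: for a connected graph $G$ with degree set $\{\alpha, 2\alpha\}$ in which the degree-$2\alpha$ vertices form an independent set, one has $\chi'_{reg}(G) = 2$ if and only if $reg(G) = 2$. Membership in {\bf NP} is clear for both statements: given a partition of $E(G)$ into two edge sets, I would compute the connected components of each induced subgraph and verify that every vertex in a given component has the same degree inside that subgraph, which certifies local regularity in polynomial time.

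For the hardness of part (i), I would reuse the bipartite instances $G$ output by the reduction in the proof of Theorem~\ref{T3}(i) with $\alpha = 3$. Such $G$ are bipartite with degree set $\{3,6\}$, so they have maximum degree six, matching the statement. The decisive point is that these instances are connected and that their degree-$6$ vertices form an independent set; granting this, the displayed equivalence applies verbatim with $\alpha = 3$, so deciding $\chi'_{reg}(G) = 2$ is exactly as hard as deciding $reg(G) = 2$, which Theorem~\ref{T3}(i) shows is {\bf NP}-complete.

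For part (ii) the argument is identical, now starting from the planar graphs with degree set $\{3,6\}$ produced in the proof of Theorem~\ref{T4}. Provided these graphs are connected and have their degree-$6$ vertices pairwise non-adjacent, the same equivalence transfers the {\bf NP}-completeness of $reg(G) = 2$ to that of $\chi'_{reg}(G) = 2$, completing the reduction. Since $G$ is not regular (it has two distinct degrees), $reg(G) \geq 2$ automatically, so the question $reg(G) = 2$ is genuinely the question of a two-part decomposition, and likewise for $\chi'_{reg}$.

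The main obstacle is precisely the verification that the structural hypotheses of the equivalence—connectedness together with the independence of the degree-$2\alpha$ vertices—already hold for the gadgets constructed in Theorems~\ref{T3} and~\ref{T4}; this is a property of those reductions rather than a new argument. Should some gadget place two degree-$6$ vertices adjacent to one another, I would repair it by a local modification, such as subdividing each offending edge by a short degree-regularizing path, chosen so as to preserve the degree set $\{3,6\}$, the bipartiteness (respectively planarity), and the yes/no answer of the underlying regular-number instance, thereby restoring the independence condition without affecting the reduction.
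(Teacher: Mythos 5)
Your proposal matches the paper's own justification: the paper derives the corollary by invoking exactly the equivalence you cite (for connected graphs with degree set $\{\alpha,2\alpha\}$ whose degree-$2\alpha$ vertices form an independent set, $\chi'_{reg}(G)=2$ iff $reg(G)=2$) applied to the instances built in the proofs of Theorems~\ref{T3} and~\ref{T4}, and the construction in Theorem~\ref{T3} explicitly guarantees that no two degree-$2\alpha$ vertices are adjacent, so your contingency repair is not needed. This is essentially the same argument as the paper's.
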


In Theorem \ref{T3}, we prove that there is polynomial time algorithm to decide whether $ reg(G) = 2 $ for a given graph $ G $
with maximum degree five. Here, we show that deciding whether a given  subcubic graph can be decomposed into two
subgraphs such that each subgraph is locally regular, is {\bf NP}-complete.

\begin{thm} \label{newt3}
For a given subcubic graph $G$, determining whether it can be decomposed into two
subgraphs such that each subgraph is locally regular, is {\bf NP}-complete.
\end{thm}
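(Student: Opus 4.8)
The plan is to prove membership in \textbf{NP} and then \textbf{NP}-hardness by a reduction from a bounded-occurrence variant of monotone \emph{Not-All-Equal $3$-Satisfiability}, engineered so that the constructed graph stays subcubic. Membership is immediate: a $2$-colouring of $E(G)$ is a polynomial certificate, and one checks in polynomial time that every connected component of each colour class is regular. For the hardness direction I would first isolate the local \emph{forcing rules} that any locally regular $2$-decomposition must obey. Three facts do the work. (i) The claw $K_{1,3}$ has $\chi'_{reg}=3$: any $2$-colouring of its three edges repeats a colour at the centre, producing a colour class in which the centre has degree $2$ (or $3$) while its incident leaves have degree $1$, so that component is not regular. (ii) If an edge incident to a degree-one vertex receives colour $c$, then at its other endpoint every remaining incident edge must receive the opposite colour, since the leaf forces that endpoint to have $c$-degree exactly $1$, an isolated $c$-edge. (iii) Consequently, along an induced path whose ends are pinned by leaves (or by gadget structure), the colours are forced to alternate.

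These rules yield the gadgets. A \emph{wire} is a leaf-anchored path; by (iii) its colouring is rigid, the colour of a distinguished edge encodes a Boolean value, and the parity of its length lets it transmit a literal or (if needed) its negation. A \emph{variable gadget} is a small subcubic block admitting exactly two valid colourings, true and false, from which several wires emanate; to serve a variable that occurs several times I would use a \emph{splitter}, a short subcubic tree of degree-three vertices whose forced colouring copies the bit onto all outgoing wires. A \emph{clause gadget} is a single degree-three vertex $v$ at which the three literal-wires meet. The key design goal is to anchor the wires so that no monochromatic $3$-regular component can form through $v$; then the only configurations forbidden at $v$ are the two monochromatic ones (all three incident edges red, or all three blue), which are precisely the ``all equal'' patterns ruled out by a Not-All-Equal clause. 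Assembling one variable gadget per variable, one splitter network per occurrence list, and one clause vertex per clause, and verifying that every junction has degree at most three and the total size is polynomial, completes the construction.

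Correctness then splits into the two usual directions. For the forward direction, a satisfying Not-All-Equal assignment colours each wire consistently and leaves every clause vertex with a mixed two-one colour split, which the construction is set up to realize by a cycle edge plus an isolated matching edge, giving a genuine locally regular $2$-decomposition. For the converse, any locally regular $2$-decomposition of $G$ is forced by rules (i)--(iii) to be rigid on every wire and splitter, hence to induce a well-defined truth assignment, and the impossibility of a monochromatic clause vertex guarantees that every clause is not-all-equal satisfied. I expect the main obstacle to be exactly this converse analysis: one must rule out all \emph{spurious} colourings that evade the intended forcing --- for instance a long monochromatic cycle threading several wires, or a stray $3$-regular block bridging gadgets --- while simultaneously keeping every gadget within the degree-three budget. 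Pinning the wires tightly enough (with leaves and short cycles) to defeat these escapes, yet without ever creating a claw that would render $G$ trivially non-decomposable, is the delicate part of the proof.
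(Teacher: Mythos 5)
Your overall strategy matches the paper's: membership in \textbf{NP} is immediate, and the hardness reduction is from a monotone, bounded-occurrence Not-All-Equal satisfiability variant (the paper uses \emph{Cubic Monotone NAE $(2,3)$-Sat}), exploiting the fact that in a subcubic graph every monochromatic component of a locally regular $2$-decomposition must be a single edge, a cycle, or a cubic block. Your forcing rules (i)--(iii) are correct as far as they go. The genuine gap is in the clause gadget. If the three literal-wires meeting the clause vertex $v$ are rigid alternating paths --- which is exactly what your rules (ii)--(iii) make them --- then for each wire the neighbour $w_i$ of $v$ has degree $1$ in the colour $c_i$ of the terminal edge $vw_i$; hence that edge must be an isolated matching edge, forcing $v$ to have $c_i$-degree exactly $1$. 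Three edges and two colours make this impossible by pigeonhole, so your clause vertex admits \emph{no} valid colouring at all: not only are the monochromatic patterns excluded, the mixed $2$--$1$ patterns are excluded too, and the reduction collapses (every instance maps to a NO-instance). The ``cycle edge plus an isolated matching edge'' resolution you mention requires the two majority-colour edges at $v$ to close into a monochromatic cycle, which your path-wires cannot supply. The paper's clause gadget $\mathcal{U}(c)$ contains internal cycle structure for precisely this reason, and its variable gadget $\mathcal{W}(x)$ places every degree-three vertex on a triangle or a $C_6$ so that the repeated colour at such a vertex always has a cycle to live on. Your ``splitter'' (a tree of degree-three vertices) has the same defect: every degree-three vertex must lie on a monochromatic cycle or inside a monochromatic cubic component, and a tree provides neither. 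Until the clause and splitter gadgets are redesigned with local cycles that can absorb the majority colour, the converse direction of your reduction fails.
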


\subsection{Locally $k$-irregular graphs}

We say that a graph is {\it locally $k$-irregular} if and only if for every two adjacent vertices $v$ and $u$, $|d(v)-d(u)|\geq k$.
In the following, we would like to decompose $G$ into
locally $k$-irregular subgraphs, where by a decomposition of $G$ into $t$ locally
$k$-irregular subgraphs
 we refer to a partition $E_1, \ldots, E_t$ of $E(G)$ such
that $G[E_i]$ is locally $k$-irregular for every $i =1,\cdots, t$.
The {\it $k$-irregular chromatic index} of $G$, denoted by $ \chi'_{k-irr}$,
is the minimum number $t$ such that $G$ can be decomposed into $t$ locally
$k$-irregular subgraphs. For every $k$, let $\mathcal{G}_k$ be the set of graphs which can be decomposed into locally
$k$-irregular subgraphs, define:

\begin{center}
$h(k)=\max_{ G\in \mathcal{G}_k}  \chi'_{k-irr}(G) $.
\end{center}

Baudon {\it et al.}
characterized all connected graphs which cannot be decomposed into locally
$1$-irregular subgraphs and  called them {\em exceptions} \cite{baudon2013decomposing}.
They conjectured $h(1)\leq 3$ \cite{baudon2013decomposing}.
Also, Baudon {\it et al.} asked the computational complexity of
determining $\chi'_{1-irr}=2$ for bipartite graphs \cite{bensmail2013complexity}.
We show that for
each $k>1 $, deciding whether $\chi'_{k-irr}(G)=2$ for a given planar bipartite graph $G$  is ${\bf NP}$-complete.
For all $k$ we prove the lower bound $h(k) \geq 2k+1$ and we will use
 mutually orthogonal Latin squares to prove that  $h(k)=\Omega(k^2)$. Finding a better lower bound can
be interesting for future work.

\begin{thm}\label{T5}\\
(i) For every $k>1 $, determining whether $\chi'_{k-irr}(G)=2$ for a given planar bipartite graph $G$  is ${\bf NP}$-complete.\\
(ii) For each $k$, $h(k) \geq 2k+1$ and if $k>3$, $h(k) \geq 4k$.\\
(iii) $h(k)=\Omega(k^2)$.\\
(iv) For each fixed $k$, there is a polynomial time algorithm to decide whether a given graph $G$
 with maximum degree $k+1$ can be decomposed into two locally $k$-irregular subgraphs.
\end{thm}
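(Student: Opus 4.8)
The four parts call for different techniques, so I would handle them separately, beginning with the structural observation that drives parts (iii) and (iv): in any locally $k$-irregular subgraph $H$, an edge $uv$ forces $|d_H(u)-d_H(v)|\ge k$, so a vertex of degree $d$ in $H$ sees every $H$-neighbour at degree $\le d-k$ or $\ge d+k$. For part (i) I would reduce from a planar, bounded-occurrence variant of 3-SAT, producing a planar bipartite graph $G$ with $\chi'_{k-irr}(G)=2$ if and only if the formula is satisfiable. These instances must have maximum degree strictly above $k+1$, since by part (iv) the range $\Delta(G)\le k+1$ is polynomially decidable. I would build variable- and clause-gadgets out of stars and short paths whose degrees straddle the threshold $k$, tuned so that each variable gadget admits exactly two locally $k$-irregular $2$-colorings (encoding the truth values) and each clause gadget is $2$-colorable precisely when one incident literal is set correctly; the gap condition $|d_H(u)-d_H(v)|\ge k$ is what couples the gadgets. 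Planarity is obtained by laying the gadgets along a planar embedding of the variable--clause incidence graph and routing the connecting edges without crossings, bipartiteness by $2$-coloring the gadget vertices coherently, and membership in {\bf NP} is immediate since a candidate pair of colour classes is checkable in polynomial time.

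For the lower bounds of part (ii) I would exhibit explicit members of $\mathcal{G}_k$ and bound $\chi'_{k-irr}$ from below by a degree-counting argument built on the structural observation above. Concretely, I would place one or a few vertices of carefully chosen high degree and pin down their neighbours' degrees with attached stars, so that in any decomposition each class can absorb only a controlled, degree-stratified share of the edges incident to the central vertices. Summing these per-class capacities against the total degree then forces at least $2k+1$ classes, and a denser second construction, valid for $k>3$, tightens the count to $4k$. The only delicate point is verifying that the chosen graph is itself decomposable into locally $k$-irregular subgraphs, which I would settle by exhibiting one explicit decomposition.

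Part (iii) is where I expect the real difficulty, and where the mutually orthogonal Latin squares (MOLS) enter. I would take an extremal graph of order $\Theta(k^2)$---a complete or complete-bipartite host, possibly blown up---carrying $\Omega(k^2)$ pairwise conflicting irregularity demands, and use a near-complete family of MOLS to coordinate the per-class degrees so that two adjacent vertices are separated by a gap of at least $k$ exactly in the class containing their edge. Orthogonality guarantees that no single class can simultaneously serve two demands without creating an adjacent pair whose degrees are too close, so the demands cannot be merged and $\Omega(k^2)$ classes are genuinely required; the same MOLS-based assignment simultaneously certifies that the graph lies in $\mathcal{G}_k$. Making the translation between orthogonality of the Latin squares and the degree-gap obstruction exact---so that it yields both decomposability and the quadratic lower bound---is the crux of the whole theorem.

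Finally, part (iv) follows almost directly from the structural observation. When $\Delta(G)\le k+1$, the two endpoints of any edge of a locally $k$-irregular subgraph have degrees in $\{1,\dots,k+1\}$ differing by at least $k$, hence exactly $1$ and $k+1$; therefore every locally $k$-irregular subgraph is a disjoint union of stars $K_{1,k+1}$. Deciding a $2$-decomposition thus amounts to $2$-coloring $E(G)$ so that each colour class is such a star forest, and the per-vertex constraints are rigid: the degree of a vertex in each class must lie in $\{0,1,k+1\}$, so for $k\ge 2$ every vertex of degree $k+1$ is forced wholly into one class and its neighbours become leaves there. I would encode these local constraints as a $2$-SAT instance (or solve them by direct propagation together with a matching/flow consistency check) and test feasibility in polynomial time, which settles part (iv) and, together with part (i), pins the complexity threshold exactly at maximum degree $k+1$.
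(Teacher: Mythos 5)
Your part (iv) is essentially the paper's argument and is complete: with $\Delta(G)\le k+1$ every edge of a locally $k$-irregular class must join a class-degree-$1$ vertex to a class-degree-$(k+1)$ vertex, so each class is a disjoint union of stars $K_{1,k+1}$, each degree-$(k+1)$ vertex of $G$ is forced wholly into one class, and the remaining constraints are pairwise ``these two centres differ'' conditions. The paper phrases this as testing bipartiteness of an auxiliary graph on the degree-$(k+1)$ vertices (two centres adjacent when they share a degree-$2$ neighbour); your $2$-SAT encoding is the same thing. For the other three parts, however, you have written a plan rather than a proof, and two of the gaps are substantive. For (iii) you explicitly defer the crux: you never specify the graph or the mechanism by which orthogonality forces $\Omega(k^2)$ classes. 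The paper's construction is concrete: take $p=\Theta(k)$ copies of $K_{\lfloor k/2\rfloor+1}$, attach to the $i$th vertex of each copy a bundle of $\lceil k/2\rceil+i$ pendant leaves (so every bundle must lie entirely in one class with its centre at full degree, and any vertex obtained by merging two leaves has degree $2$ and forces its two incident edges into different classes), and then merge leaves across copies according to $p-1$ MOLS of order $p$ so that \emph{every} pair of bundles from distinct copies shares a merged leaf and hence must occupy distinct classes. That pairwise-conflict structure among $\Theta(k^2)$ bundles is exactly what MOLS buy, and without it your sketch does not yield the quadratic bound. Similarly for (ii) you name no graph; the paper's examples (a triangle with carefully sized pendant stars, and for $k>3$ a union of $4k$ rigid gadgets hung on three shared apex vertices) are needed to make the degree count close, and one must also verify decomposability, which you acknowledge but do not do.

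For (i) there is a more pointed issue: you propose reducing from a planar bounded-occurrence variant of $3$-SAT with clause gadgets that are ``$2$-colorable precisely when one incident literal is set correctly.'' In this problem the two classes of a decomposition are interchangeable, so there is no global notion of ``true'' versus ``false''; a clause condition such as ``at least one true literal'' is not invariant under swapping the classes and cannot be expressed by a symmetric gadget, and monotone planar variants give you no negations to fix an orientation. This is precisely why the paper reduces from \emph{Monotone Planar 2-in-4 4-SAT}, whose satisfaction condition (exactly two of four literals true) is self-complementary; its gadgets force each clause vertex to have exactly half of its literal edges in each class. Your reduction scheme would need to be rebuilt around such a self-complementary source problem before it could work.
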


\subsection{Summary of results}

A summary of results and open problems are shown in Table 1 and Table 2. In the first table
we summarize the recent results on the computational complexity of deciding
whether a family of graphs can be decomposed into two subgraphs with some conditions and
in the second table
we summarize the recent upper bounds and conjectures on the different types of partitioning.
For more information about the decomposing the graphs into regular/irregular subgraphs see \cite{MR3581338, MR3512668, MR3543192, luvzar2016new, merker2016graph}.

\begin{table}[ht]
\small
\caption{Recent results on edge  decomposing of graphs into two subgraphs. Blue text shows our results.} 
\centering 
\begin{tabular}{l l  l l l} 
\hline\hline 
                 & Tree  &  Bipartite  & Planar  & Subcubic\\ [0.5ex] 
\hline 
Irregular chromatic index &  {\bf P} \cite{bensmail2013complexity} &  Open \cite{bensmail2013complexity} &
\color{blue} {\bf NP}-c (Th. \ref{T2}) \color{black}
& \color{blue} {\bf NP}-c (Remark \ref{R1}) \color{black}\\
Regular-irregular number &  Open  & \color{blue} {\bf NP}-c (Th.  \ref{T1}) \color{black}
& \color{blue} {\bf NP}-c (Th.  \ref{T1}) \color{black} & \color{blue} {\bf NP}-c (Th.  \ref{T1}) \color{black}\\
1 regular plus 1 irregular &    Open  & \color{blue} {\bf NP}-c (Cor. \ref{C1}) \color{black}
 & \color{blue} {\bf NP}-c (Cor. \ref{C1}) \color{black} & \color{blue} {\bf NP}-c (Cor. \ref{C1}) \color{black}\\
Regular number&  {\bf P} \cite{kulli} &  \color{blue} {\bf NP}-c (Th.  \ref{T3}) \color{black} & \color{blue} {\bf NP}-c (Th.  \ref{T4}) \color{black} &
\color{blue} {\bf P} (Th. \ref{T3}) \color{black} \\
Regular chromatic index & \color{blue}  {\bf P} (Th. \ref{newt2}) \color{black} & \color{blue} {\bf NP}-c (Cor. \ref{cor2}) \color{black} & \color{blue} {\bf NP}-c (Cor. \ref{cor2}) \color{black}&
\color{blue} {\bf NP}-c (Th.  \ref{newt3}) \color{black} \\
$k$-irregular chromatic index $(k>1)$&  Open &  \color{blue}  {\bf NP}-c (Th.  \ref{T5}) \color{black} & \color{blue} {\bf NP}-c (Th.  \ref{T5}) \color{black} & \color{blue} {\bf P}  (Th.  \ref{T5}) \color{black}\\
Regular-irregular chromatic index & {\bf P}  \cite{bensmail2014edge} &
 {\bf P}  (Conj. \ref{C2} \cite{bensmail2014edge})   &
  {\bf P} (Conj. \ref{C2} \cite{bensmail2014edge})    &
  {\bf P}  (Conj. \ref{C2} \cite{bensmail2014edge})   \\
 [1ex] 
\hline 
\end{tabular}
\label{table:nonlin} 
\end{table}

\begin{table}[ht]
\small
\caption{Recent upper bounds and conjectures} 
\centering 
\begin{tabular}{l c  c c } 
\hline\hline 
 & Tree &  Bipartite graphs &   General graphs\\ [0.5ex] 
\hline
Irregular chromatic index &    3 \cite{baudon2013decomposing} &
   3 (Conj. \ref{C1} \cite{baudon2013decomposing})
&   3 (Conj. \ref{C1} \cite{baudon2013decomposing})   \\
Regular-irregular chromatic index &   2  \cite{bensmail2014edge}
& 6 \cite{bensmail2014edge} &
   2 (Conj. \ref{C2} \cite{bensmail2014edge})  \\
Regular-irregular number &   3  \cite{baudon2013decomposing}   &
   3 (Conj. \ref{C3})
 &  3 (Conj. \ref{C3})  \\
Regular number&   $\Delta(G)$ \cite{kulli}   & $\Delta(G)$ \cite{kulli} &  $\Delta(G)+1$ \cite{kulli}  \\
Regular chromatic index & \color{blue} $\Delta(G)$ (Th.  \ref{newt2}) \color{black}   & \color{blue}$\Delta(G)$ (Th.  \ref{newt2}) &  \color{blue}$\Delta(G)$
(Th.  \ref{newt2}) \color{black} \\
$k$-irregular chromatic index &   $f(k)$ (Prob. \ref{problem1})
&    $f(k)$ (Prob. \ref{problem1})
&   $f(k)$ (Prob. \ref{problem1})  \\
 [1ex] 
\hline 
\end{tabular}
\label{table:nonlin} 
\end{table}

\section{Proofs}

Here, we show that
determining whether a given planar bipartite graph $G$ with maximum degree three can be decomposed into at most
two subgraphs, such that each subgraph is regular or locally irregular is {\bf NP}-complete.

\begin{ali}{
Clearly, the problem is in $ \mathbf{NP} $. We reduce {\em  Cubic Planar 1-In-3 3-Sat} to our problem.
 Moore and Robson \cite{MR1863810} proved
that the following problem is $ \mathbf{NP}$-complete.

 {\em  Cubic Planar 1-In-3 3-Sat.}\\
\textsc{Instance}: A 3-Sat formula $\Phi=(X,C)$
 such that every variable
 appears in exactly three clauses, there
 is no negation in the formula, and the
bipartite graph obtained by linking a variable and a clause if and only
 if the
 variable appears in the clause, is planar.\\
\textsc{Question}: Is there a truth assignment for $X$ such that
 each clause in $C$ has exactly
one true literal?

Let $\Phi=(X,C)$ be an instance of {\em  Cubic Planar 1-In-3 3-Sat}.
 Without loss of generality suppose that the number of clauses in  $\Phi$
 is even and $ C= \{c_i : i \in \mathbb{Z}_m \}$. Note that we have $|X|=|C|=m$. We convert the formula $\Phi$ into a
 graph $G_\Phi$ such that the formula $\Phi$ has a 1-in-3 satisfying assignment if and
 only if  the edge set of the graph $G_\Phi$ can be decomposed into two subgraphs
such that each subgraph is regular or locally irregular.
For every pair $(x,i)$, where $x \in X$ and $i \in \mathbb{Z}_m$, consider two cycles
$x_i x'_i x''_i x'''_i$ and $z_i z'_i z''_i z'''_i$, also, put a vertex $ s_i$ and join the vertex $ s_i$ to the vertices
$x''_i, z''_i$.
 Next
for every number $i$, $i \in \mathbb{Z}_m$ put a vertex $r_i $ and join the vertex $r_i$ to the vertices  $z'''_i, x'''_{(i+1 \mod m)}$. Also, for each clause $c_i$, $c_i\in C$ put a clause vertex $c_i$ and
 for every variable $x\in X$ if $x$ appears in $c_i$, then put the edge $c_i x_i$. Having
  done these for all variables $x$ and all clauses $c_i$, in the resultant graph the
degree of every vertex is  two or three and the graph is planar. Next, for every vertex $v$ of degree two put a new vertex $v_u$
(we will call this new vertex
a dummy vertex in our proof) and join the vertex $v$
to the vertex $v_u$. Call the resultant graph $\mathcal{F}$.  It is easy to check that $\mathcal{F}$ is planar, bipartite and
its degree set is $\{1,3\}$. See Fig. \ref{Fig001}.

\begin{figure}[ht]
\begin{center}
\includegraphics[scale=.35]{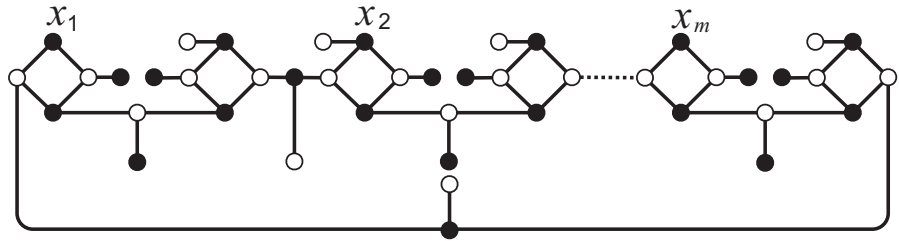}
\caption{A part of the graph $\mathcal{F}$, when  the variable $x$ appears in $c_1$, $c_4$ and $c_m$.
} \label{Fig001}
\end{center}
\end{figure}

Now, consider the following tree. First, join two
vertices $v$ and $u$ by an edge. Then consider four paths of lengths 4,4,2,2  called   $P_1, P_2, P_3, P_4$,
respectively.
Identify one of the ends of each of $P_1$, $P_2$ with $v$, and  identify one of the ends of each of $P_3$, $P_4$
with $u$. Call the resultant tree $\mathcal{T}$.
The graph $\mathcal{T}$ cannot be decomposed into two locally irregular subgraphs. Also,
$\mathcal{T}$ cannot be decomposed into two  regular subgraphs.
Thus, we can only decompose $\mathcal{T}$ into  two subgraphs such that one subgraph is  regular  and the other one
is locally irregular.
Since $\mathcal{T}$ is  a tree and is not locally irregular, the regular subgraph should be 1-regular.

{\bf Construction of $G_\Phi$.}
Consider a copy of the graph $\mathcal{F}$, a copy of the graph $\mathcal{T}$ and call their union $G_\Phi$.
$G_\Phi$  is planar, bipartite and its maximum degree is three. Also, we cannot decompose it into two locally irregular subgraphs
or into two regular subgraphs.

First assume that the graph $G_\Phi$ can be decomposed into a regular subgraph $\mathcal{R}$ and a locally irregular subgraph $\mathcal{I}$.
By the structure of the tree $\mathcal{T}$, the subgraph $\mathcal{R}$ should be 1-regular.
For every pair $(x,i)$, where $x \in X$ and  $i \in \mathbb{Z}_m$, let $y_i$ be the unique neighbor
of the vertex $x_i$ which is not in the cycle $x_i x'_i x''_i x'''_i$ (similarly, for every pair $(x',i)$, where $x \in X$
and  $i \in \mathbb{Z}_m$, let $y'_i$ be the unique neighbor of the vertex $x'_i$ which is not in the
cycle $x_i x'_i x''_i x'''_i$). Note that according to the structure of the graph $G_\phi$, the vertex $y_i$ is a clause vertex
 or a dummy vertex. Also, the vertex $y_i'$ is a dummy vertex.
Since the subgraph $\mathcal{R}$ is 1-regular, the degree
sequence $d_\mathcal{I}(x_i)d_\mathcal{I}(x'_i)d_\mathcal{I}(x''_i)d_\mathcal{I}(x'''_i)$ is 2323 or 3232 (otherwise
 the subgraph  $\mathcal{I}$
is not locally irregular). Similarly, the degree
sequence $d_\mathcal{I}(z_i)d_\mathcal{I}(z'_i)d_\mathcal{I}(z''_i)d_\mathcal{I}(z'''_i)$ is 2323 or 3232.
Consequently, for every $i$, $E(\mathcal{R}) \cap \{x_i y_i \}\neq \emptyset$ or $E(\mathcal{R}) \cap \{  x'_i y'_i \}\neq \emptyset$. On the other hand, for every number $i$, $i \in \mathbb{Z}_m$,  $s_ix''_i,s_iz''_i, z'''_i r_i, r_ix'''_{(i+1 \mod m)} \in E(G_\Phi)$,
therefore $E(\mathcal{R}) \cap \{x_i y_i , x'_i y'_i : i \in \mathbb{Z}_m\}$ is exactly $\{x_i y_i : i \in \mathbb{Z}_m\}$
or $\{x'_i y'_i : i \in \mathbb{Z}_m\}$ ({\bf Property A}).
Also, since $\mathcal{R}$ is 1-regular, for every clause vertex $c_i$, $d_{\mathcal{I}}(c_i)$ is 2 or 3.
But $d_\mathcal{I}(c_i)$ is not 3. To the contrary, assume that $d_\mathcal{I}(c_i)$ is three and
let $u$ be a neighbor of $c_i$ in $\mathcal{I}$. We have $d_{\mathcal{I}}(u)=3$, so $d_\mathcal{I}(c_i)=d_\mathcal{I}(u)$,
but this is a contradiction.
Consequently, $d_{\mathcal{I}}(c_i)=2$ ({\bf Property B}).
Define $\Psi: X \rightarrow \{true, false\}$ such that $\Psi(x)=true$ if and only if $x_i c_i \in E(\mathcal{R})$ for some $i$.
By Property A, the function $\Psi$ is well-defined and by Property B, the function $\Psi$ is a 1-in-3 satisfying assignment for the formula $\Phi$.

On the other hand, assume that the function $\Psi: X \rightarrow \{true, false\}$ is a 1-in-3 satisfying assignment for $\Phi$.
We show that there is a partition for the edge set of $G_\Phi$ into two subgraphs
$\mathcal{R}$ and $\mathcal{I}$ such that $\mathcal{R}$ is regular and
$\mathcal{I}$ is locally irregular.
For every pair $(x,i)$, where $x \in X$ and  $i \in \mathbb{Z}_m$, let $w_i$ ($w_i', w_i'', w_i'''$, respectively) be the unique neighbor
of the vertex $z_i$ ($z_i', z_i'', z_i'''$, respectively) which is not in the cycle $z_i z'_i z''_i  z'''_i$. Similarly, let $y_i''$ ($y_i'''$, respect.) be the unique neighbor
of the vertex $x_i''$ ($x_i'''$, respect.) which is not in the cycle $x_i x'_i x''_i  x'''_i$. Now, for each $x\in X$,
put $\{x_i y_i  , x_i''y_i''  ,z_i'w_i' , z_i'''w_i'''  : i \in \mathbb{Z}_m \}$ in $E(\mathcal{R}) $ if $\Psi(x)=true$ and
put $\{x_i' y_i', x_i'''y_i''',z_i w_i  , z_i'' w_i''  : i \in \mathbb{Z}_m \}$ in $E(\mathcal{R}) $ if $\Psi(x)=false$. One can see that this is a decomposition of the graph $\mathcal{F}$ into  two subgraphs
$\mathcal{R}$ and $\mathcal{I}$ such that $\mathcal{R}$ is 1-regular and
$\mathcal{I}$ is locally irregular. Note that the tree $\mathcal{T}$ can be decomposed into two subgraphs
$\mathcal{R}$ and $\mathcal{I}$ such that $\mathcal{R}$ is 1-regular and
$\mathcal{I}$ is locally irregular. This completes the proof.
}\end{ali}

If $T$ is a tree which is not an odd length path, then its irregular chromatic index is at most three and
also, there exist infinitely many trees with irregular chromatic
index 3 \cite{baudon2013decomposing}.
Bensmail and Stevens proved that if $T$ is a tree, then its regular-irregular chromatic index is at most two.
Here, for every $k>2$, we construct a tree $T$ with $\Delta(T)= k$ such that $T$ cannot be decomposed into a matching and
a locally irregular subgraph and also, we show that every tree
can be decomposed into two matchings and a locally irregular subgraph.

\begin{alii}{
(i)
Let $k>2$ be a fixed number, we construct a tree $\mathcal{T}$ with $\Delta( \mathcal{T} )= k$ such that the tree $\mathcal{T}$
 cannot be decomposed into a matching and
a locally irregular subgraph.
First, consider the following auxiliary tree. Join two
vertices $v$ and $u$ by an edge. Then consider four paths of lengths 3,3,3,1 and call them  $P_1, P_2, P_3, P_4$,
respectively. Identify one of the ends for
 each of $P_1$, $P_2$ with $v$, and finally identify one of the ends for each of $P_3$, $P_4$
with $u$. Call the resultant tree $\mathcal{T'}$. The tree $\mathcal{T'}$ has exactly one vertex of degree one such that its neighbor
has degree three, call this vertex the {\em bad vertex} of the tree $\mathcal{T'}$.
Now, consider $k$ copies of the tree $\mathcal{T'}$ and a new vertex $z$. Join the vertex $z$ to the bad vertex of each copy of
$\mathcal{T'}$ and call the resultant tree $\mathcal{T}$. If the tree $\mathcal{T}$ can be decomposed into a matching and
a locally irregular subgraph, by the structure of $\mathcal{T'}$, in each copy of
$\mathcal{T'}$, the edge between the bad vertex of $\mathcal{T'}$ and $z$ should be in matching, but since the degree
of $z$ is $k> 2$, this is a contradiction.\\ \\
(ii)
The proof is by induction on the number of edges in the tree.  Assume that, for some integer $m\geq 2$, every tree
with $m-1$ edges can be decomposed into two subgraphs
$\mathcal{R}$ and $\mathcal{P}$ such that $\mathcal{R}$ is
1-regular and each component of $\mathcal{P}$ is an edge or a locally irregular component. Let $T$ be a tree with $m$ edges.
Choose an arbitrary vertex $z$ of $T$, and perform a breadth-first search
algorithm from the vertex $z$. This defines a partition $V_0, V_1, \ldots, V_d$ of the vertices of $T$
where each part $V_i$ contains the vertices of $T$ which are at distance exactly $i$
from $z$. Assume that $v \in V_d$, $u \in V_{d-1}$ and $vu\in E(T)$.
Let $T'= T \setminus \{vu\}$, by the inductive hypothesis,  $T'$ can be decomposed into two subgraphs
$\mathcal{R'}$ and $\mathcal{P'}$ such that $\mathcal{R'}$ is
1-regular and each component of $\mathcal{P'}$ is an edge or a locally irregular component. Without loss of generality
 suppose that $w\in V_{d-2}$ and
$uw \in E(T)$. Three cases can be considered:\\
Case 1. If $d_{\mathcal{R'}}(u)=0$. Put $\mathcal{R}= \mathcal{R'} \cup \{uv \}$. In this case $(\mathcal{P}=\mathcal{P'}, \mathcal{R}=\mathcal{R'} \cup \{uv \})$
is a suitable partition for $T$.\\
Case 2. If $d_{\mathcal{R'}}(u)=1$ and $uw \in \mathcal{R'}$. Put $\mathcal{P}= \mathcal{P'} \cup \{uv \}$.
It is easy to see that $(\mathcal{P}=\mathcal{P'} \cup \{uv \}, \mathcal{R}=\mathcal{R'})$
is a suitable partition for $T$.\\
Case 3. If $d_{\mathcal{R'}}(u)=1$ and $uw \notin \mathcal{R'}$. Without loss of generality, assume that $e$ is incident with the vertex $u$ and $e\in \mathcal{R'}$.
Let $\mathcal{P''}= \mathcal{P'} \cup \{uv \}$.
 According to the degrees of $u$ and $w$ in $\mathcal{P'}$, one of the two partitions
 $(\mathcal{P''}, \mathcal{R'})$ or $(\mathcal{P''}\cup \{e\}, \mathcal{R'}\setminus \{e\})$
is a suitable partition for $T$. This completes the proof.\\ \\
(iii) Let $T$ be a tree, by (ii), $T$ can be decomposed into two subgraphs
$\mathcal{P}$ and $\mathcal{R}$ such that $\mathcal{R}$ is
a matching and each component of $\mathcal{P}$ is an edge or a locally irregular component. We can decompose
$\mathcal{P}$ into a matching and a locally irregular subgraph, thus $T$ can be decomposed into two
matchings and a locally irregular subgraph. This completes the proof.
}\end{alii}

In \cite{bensmail2013complexity}, Baudon {\it et al.} proved that determining whether a given planar graph $G$,
can be decomposed into two
locally irregular subgraphs is {\bf NP}-complete.
But their reduction does not preserve the planarity. We show
 that determining whether a given planar graph $G$,
can be decomposed into two locally irregular subgraphs is {\bf NP}-complete by a different reduction.

\begin{aliii}{
We reduce {\em  Monotone Planar 2-In-4 4-Sat} to our problem.
 Kara {\it et al.} \cite{kara2007complexity} proved
that the following problem is $ \mathbf{NP}$-complete.

 {\em  Monotone Planar 2-In-4 4-Sat.}\\
\textsc{Instance}: A 4-Sat formula $\Phi=(X,C)$
 such that there
 is no negation in the formula, and the
bipartite graph obtained by linking a variable and a clause if and only
 if the
 variable appears in the clause, is planar.\\
\textsc{Question}: Is there a truth assignment for $X$ such that
 each clause in $C$ has exactly
two true literals?

Let $\Psi=(X,C)$ be an instance of {\em Monotone Planar 2-In-4 4-Sat}. We denote the number of clauses containing the variable $x$ by
$\gamma(x)$.
We convert $\Psi$ into a planar
graph $G_\Psi$ such that $\Psi$ has a 2-in-4 satisfying assignment if and
only if  the edge set of $G_\Psi$ can be decomposed into two locally irregular subgraphs.
First, consider the gadget $\mathcal{K}$ which is shown in Fig \ref{p1}. We will use  the gadget $\mathcal{K}$ several times and it has some important properties.

\begin{lem}\label{lem001}
Suppose that $G$ is a graph and has a copy of $\mathcal{K}$ as an induced subgraph. Also, assume that the graph $G$ can be decomposed
into two locally irregular subgraphs, then (up to symmetry) the set of black edges  (see Fig \ref{p1}) is in one part and the set
of blue edges is in another part.
\end{lem}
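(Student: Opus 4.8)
The plan is to read a decomposition of $G$ into two locally irregular subgraphs as a $2$-coloring of the edges, with the two colors standing for the two parts $\mathcal{I}_1$ and $\mathcal{I}_2$; write $d_j(w)$ for the degree of a vertex $w$ in the color-$j$ subgraph. The local irregularity requirement then translates into a single local rule: whenever an edge $uv$ receives color $j$, one must have $d_j(u)\neq d_j(v)$. Because the two parts play symmetric roles, I may fix the color of any one convenient edge of $\mathcal{K}$ and then argue that every other edge is forced; the phrase ``up to symmetry'' in the statement is exactly this freedom to swap the two colors.

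First I would locate the \emph{anchor} vertices of $\mathcal{K}$ --- the low-degree vertices (degree one or two) and the vertices whose neighborhood inside $\mathcal{K}$ is most constrained. For a degree-two vertex $v$ with neighbors $a,b$, its two incident edges are either monochromatic or split between the parts; the rule $d_j(v)\neq d_j(a)$, together with the degrees forced on $a$ and $b$, usually eliminates one of these options and pins down the colors at $v$. Since $\mathcal{K}$ is an \emph{induced} subgraph of $G$, I also know the exact degree each vertex of $\mathcal{K}$ has, so no uncontrolled outside edges can disturb the count; this is what lets the local rule be applied with precise degree values rather than mere inequalities.

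The heart of the argument is a propagation step: I would show that the color of a single edge determines the colors of all edges sharing an endpoint with it, and then chase this implication around $\mathcal{K}$ until every edge is fixed. Concretely, assuming an edge $e=uv$ lies in part $1$, I examine the possible values of $d_1(u)$ and $d_1(v)$; each choice dictates how many of the remaining edges at $u$ and at $v$ may also be color $1$, which in turn forces colors further out. Running this around the gadget, every consistent assignment must coincide --- after possibly swapping the two colors --- with the black/blue partition drawn in Fig.~\ref{p1}.

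The main obstacle I anticipate is exhaustiveness of the case analysis: a partial coloring can satisfy the local rule at every vertex examined so far and still fail globally by eventually producing two adjacent vertices of equal degree in the same part. The delicate part is therefore to verify that the only \emph{globally} consistent $2$-colorings are the two mirror images of the black/blue split, ruling out every ``mixed'' coloring that is locally innocuous but creates a monochromatic equal-degree adjacency somewhere inside $\mathcal{K}$. Tracking the exact degree count at each branch point, rather than just parities, is what I expect to make this final step go through.
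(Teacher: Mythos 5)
The paper's entire proof of this lemma is the single sentence ``By considering all possible cases, the proof is straightforward,'' so your plan --- fix the color of one anchor edge up to the color swap, use the rule $d_j(u)\neq d_j(v)$ on each monochromatic edge to force the colors of neighboring edges, and propagate this around $\mathcal{K}$ while checking global consistency --- is exactly the exhaustive case analysis the paper intends, and is in fact spelled out in more detail than the paper itself provides. Like the paper, though, you never actually exhibit the cases (the gadget $\mathcal{K}$ is specified only in the figure), so your proposal matches the paper's approach without going beyond it in substance.
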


\begin{proof}{
By considering all possible cases, the proof is straightforward.
}\end{proof}

We will use the gadget $\mathcal{K}$ in order to construct our main gadgets in our reduction.

\begin{figure}[ht]
\begin{center}
\includegraphics[scale=.55]{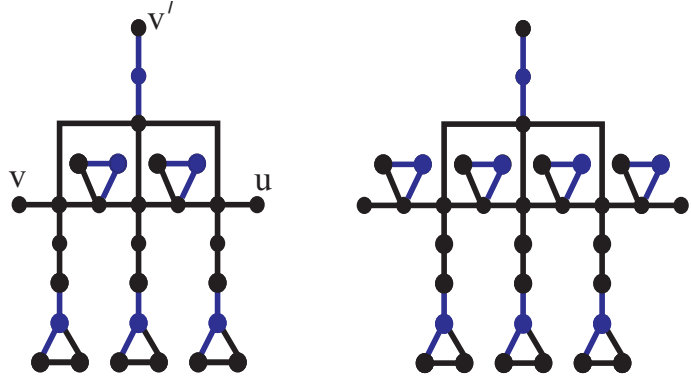}
\caption{The two auxiliary gadgets $\mathcal{L}(v,u)$ and $ \mathcal{K}$. $\mathcal{K}$ is on the right side of the figure.
} \label{p1}
\end{center}
\end{figure}

{\bf Construction of the gadget $\mathcal{A}_{\gamma(x)}(x)$.}\\
Consider a cycle of length $\gamma(x)$ with the vertices
$x_0,x_1, \ldots, x_{\gamma(x)-1}$, in that order. For each $i$, $i=0,1,\ldots, \gamma(x)-1$, replace the edge
 $x_ix_{(i+1 \mod \gamma(x))}$ of the cycle by a copy of
the gadget $\mathcal{L}(x_i,x_{(i+1 \mod \gamma(x))})$ ($\mathcal{L}(v,u)$ is shown in Fig \ref{p1}).
Finally, for each $i$, $i=0,1,\ldots, \gamma(x)-1$, put two new vertices $z_i$, $w_i$ and join the
vertex $x_i$ to the vertices $z_i$, $w_i$. Also, join the vertex $z_i$ to the vertex $w_i$. Call
the resultant auxiliary graph $\mathcal{A}_{\gamma(x)}(x)$.

There are exactly $\gamma(x)$
vertices $x'_0,x'_1, \ldots, x'_{\gamma(x)-1}$
of degree one
in $\mathcal{A}_{\gamma(x)}(x)$. We will call these vertices the
{\it important vertices} of $\mathcal{A}_{\gamma(x)}(x)$ (see Fig \ref{p1}).
Suppose that $G$ is a graph and has a copy of $\mathcal{A}_{\gamma(x)}(x)$ as an induced subgraph.
If the graph $G$ can be decomposed
into two locally irregular subgraphs, then the set of edges incident with the important vertices is in a same
part ({\bf Fact 1}).

{\bf Construction of the clause gadget $\mathcal{B}_c$.}\\
Consider a copy of the gadget $\mathcal{A}_{18}(v)$; also, add five vertices
$c, \alpha_c, \beta_c, \gamma_c, \zeta_c$ and the set of edges
$\{c\alpha_c, c\beta_c, c\gamma_c, c\zeta_c \}$.
Next, consider 18 paths of lengths
2,2,4,2,2,2,2,2,4,2,4,4,2,4,4,4,4,4
 and call them  $P_0, P_1, \ldots, P_{17}$,
respectively.
Identify one of the ends  of $P_0,P_1,P_2$ with $\alpha_c$. Identify one of the ends  of $P_3,P_4, \ldots, P_8$ with $\beta_c$.
Identify one of the ends  of $P_9,P_{10}, P_{11}$ with $\gamma_c$ and
identify one of the ends  of $P_{12}, \ldots, P_{17}$ with $\zeta_c$.
For each $i$, $i=0,1,\dots,17$, identify the other end of path $P_i$ with the important vertex $v'_i$
of $\mathcal{A}_{18}(v)$. Call the resultant gadget $\mathcal{B}_c$. See Fig. \ref{p0002}.

\begin{figure}[ht]
\begin{center}
\includegraphics[scale=.51]{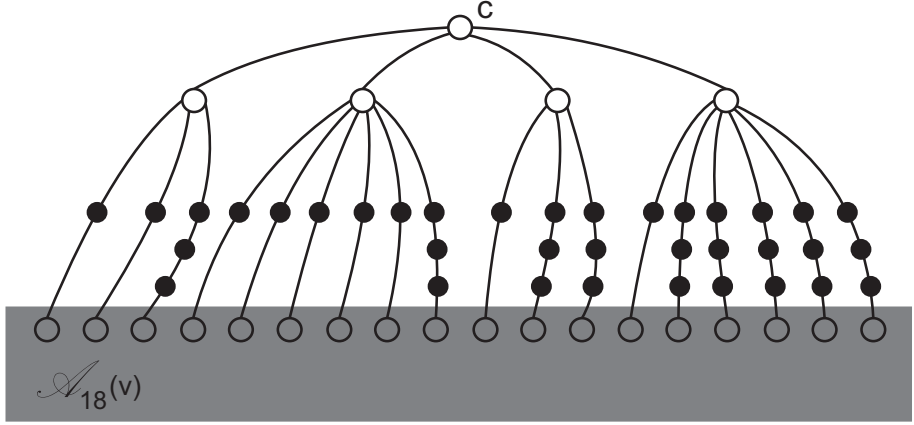}
\caption{The gadget $\mathcal{B}_c$.
} \label{p0002}
\end{center}
\end{figure}

By the structure of the gadget $\mathcal{B}_c$, if the graph $G$ can be decomposed
into two locally irregular subgraphs $\mathcal{I}_1$, $\mathcal{I}_2$,
then exactly two of the edges $c\alpha_c, c\beta_c, c\gamma_c, c\zeta_c$ are in $\mathcal{I}_1$ ({\bf Fact 2}).
Note that the vertices $\alpha_c,\beta_c, \gamma_c, \zeta_c$ are incident with 4,7,4,7 edges respectively and
exactly 3,6,1,1 of them
are in $\mathcal{I}_1$ or vice-versa respectively ({\bf Fact 3}).

Now, we are ready to define the graph $G_\Psi$. For every variable $x\in X$, put a copy of the gadget $\mathcal{A}_{\gamma(x)}(x)$ and
for each clause $c\in C$, put a copy of the gadget $\mathcal{B}_c$. For every pair $(x,c)$, if $x$ appears
in $c$, then put an edge between the vertex $c$ of the gadget $\mathcal{B}_c$ and one of the important
vertices $x'_0,x'_1, \ldots, x'_{\gamma(x)-1}$ of the gadget $\mathcal{A}_{\gamma(x)}(x)$, such that
having done this procedure for all pairs, the degree of each important vertex is two. Call the resultant planar graph $G_\Psi$.
Let $c=(x \vee y \vee z \vee w)$ be an arbitrary clause and without loss of generality suppose
that $c x'_0, c y'_0, c z'_0,c w'_0  \in E(G_\Psi) $. By Fact 2 and Fact 3,
if the graph $G_\Psi$ can be decomposed
into two locally irregular subgraphs $\mathcal{I}_1$, $\mathcal{I}_2$,
then exactly two of the edges $c x'_0, c y'_0, c z'_0,c w'_0$ are in $\mathcal{I}_1$. Thus, we can
find a 2-in-4 satisfying assignment. On the other hand, assume that the formula $\Psi$ has a 2-in-4 satisfying assignment $\Gamma$. For a given clause $c=(x \vee y \vee z \vee w)$, without loss of generality suppose
that $c x'_0, c y'_0, c z'_0,c w'_0  \in E(G_\Psi) $. Then for each literal $v$ ($v\in \{x,y,z,w\}$) put $c v'_0$ in $\mathcal{I}_1$ if and only if $\Gamma(v)=true$. One can extend this to a proper decomposition.
This completes the proof.

}\end{aliii}

It was proved that computation of the regular number is {\bf NP}-hard for connected bipartite graphs \cite{reg}.
Also, it was shown that deciding whether $reg(G) = 2$ for a given connected 3-colorable
graph $G$ is {\bf NP}-complete \cite{reg}. We improve these two results, and show that for a given
bipartite graph $G$ with maximum degree six, deciding whether $reg(G) = 2$ is {\bf NP}-complete. Furthermore,
we present a polynomial time algorithm to decide whether $ reg(G) = 2 $ for a given graph $ G $
with maximum degree five.

\begin{aliiii}{
(i)
It has been shown that the following version of
 {\em Not-All-Equal (NAE) satisfying assignment problem} is {\bf NP}-complete \cite{reg}.

 {\em Cubic Monotone  NAE (2,3)-Sat.}\\
\textsc{Instance}: Set $X$ of variables, collection $C$ of clauses over $X$ such that each
clause $c \in C$ has $\mid c  \mid \in \{ 2, 3\}$, every variable appears in
exactly three clauses and there is no negation in the formula.\\
\textsc{Question}: Is there a truth assignment for $X$ such that each clause in $C$ has at
least one true literal and at least one false literal?\\

Let $\alpha\geq 3$ be a fixed number. We reduce {\em Cubic Monotone NAE (2,3)-Sat } to
our problem in polynomial time.
Consider an instance $ \Phi $, we transform this into a bipartite  graph $G_{\Phi}$ in polynomial time
such that $ reg(G_{\Phi})=2$ if and only if $\Phi$ has an  NAE truth assignment.
We use two auxiliary gadgets $ \mathcal{H}^{\alpha}_c$ and $\mathcal{I}^{\alpha}_c$. See Fig. \ref{p0001}.

\begin{figure}[ht]
\begin{center}
\includegraphics[scale=.5]{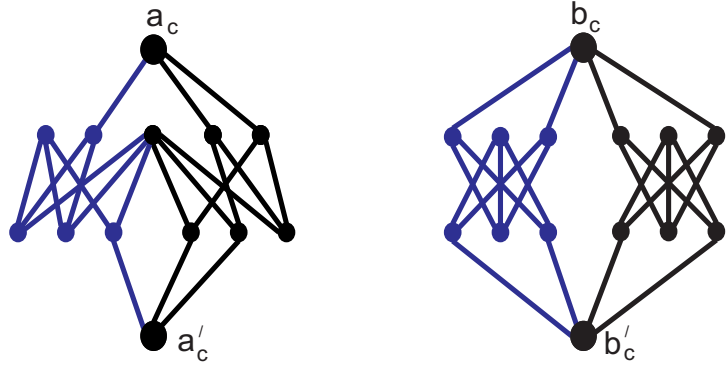}
\caption{The two auxiliary gadgets $\mathcal{H}^{3}_c$ and $\mathcal{I}^{3}_c$. The gadget $\mathcal{H}^{3}_c$ is one the left side.
} \label{p0001}
\end{center}
\end{figure}

{\bf Construction of the gadget $\mathcal{I}^{\alpha}_c$.}\\
 Consider two copies of the complete bipartite graph $K_{\alpha, \alpha}$ and call them $K[X,Y]$, $K'[X',Y']$, where
$X=\{x_i :  1 \leq i \leq \alpha \}, X'=\{x'_i: 1 \leq i \leq \alpha \}, Y=\{y_i: 1 \leq i \leq \alpha\}, Y'=\{y'_i: 1 \leq i \leq \alpha\}$.
Add two vertices $b_c$ and $b_c'$. Join the vertex $b_c$ to the vertices $x_1,\ldots, x_{\alpha-1},
x'_1,\ldots, x'_{\alpha-1}$
and join the vertex $b_c'$ to the vertices $y_1,\ldots, y_{\alpha-1},y'_1,\ldots, y'_{\alpha-1}$.
Finally, remove a perfect matching between the two sets of vertices
 $x_1,\ldots, x_{\alpha-1},x'_1,\ldots, x'_{\alpha-1}$ and
$ y_1,\ldots, y_{\alpha-1},y'_1,\ldots, y'_{\alpha-1}$. Call the resulting gadget $\mathcal{I}^{\alpha}_c$. See Fig. \ref{p0001}.

Note that in the gadget $\mathcal{I}^{\alpha}_c$ the degrees of all vertices except $b_c$ and $b_c'$ are $\alpha$. Also, the degrees of the vertices $b_c$ and $b_c'$ are $2\alpha-2$.

{\bf Construction of the gadget $\mathcal{H}^{\alpha}_c$.}\\
Consider two copies of the complete bipartite graph $K_{\alpha, \alpha}$ and call them $K[X,Y]$, $K'[X',Y']$, where
$X=\{x_i: 1 \leq i \leq \alpha\}, X'=\{x'_i: 1 \leq i \leq \alpha\}, Y=\{y_i: 1 \leq i \leq \alpha\}, Y'=\{y'_i: 1 \leq i \leq \alpha\}$.
Add two vertices $a_c$ and $a_c'$. Identify the vertex $x_{\alpha}$ with the vertex $x'_{\alpha}$.
Join the vertex $a_c$ to the vertices $x_1,\ldots, x_{\alpha-1}, x'_1,\ldots, x'_{\alpha-2}$
and join the vertex $a_c'$ to the vertices $y_1,\ldots, y_{\alpha-1},y'_1,\ldots, y'_{\alpha-2}$.
Finally, remove a perfect matching between the set of
vertices $x_1,\ldots, x_{\alpha-1},x'_1,\ldots, x'_{\alpha-2} $ and the set of vertices
$y_1,\ldots, y_{\alpha-1},y'_1,\ldots, y'_{\alpha-2}$. Call the resulting gadget $\mathcal{H}^{\alpha}_c$.
See Fig. \ref{p0001}.

Note that in the gadget $\mathcal{H}^{\alpha}_c$ the degrees of all vertices except $a_c$ and $a_c'$ are $\alpha$.
Also, the degrees of the vertices $a_c$ and $a_c'$ are $2\alpha -3$.
The graph $G_{\Phi}$ has
a copy of the gadget $\mathcal{H}^{\alpha}_c $ for each clause $c \in C$ with $\mid c \mid =3$, and a copy
of the gadget $\mathcal{I}^{\alpha}_c $ for each clause $c \in C$ with $\mid c \mid =2$.
Also, for each variable $x \in X$, put two  vertices $x$ and $x'$ and
consider a copy of the complete bipartite graph $K_{\alpha, \alpha}$ and call it $K[X,Y]$, where
$X=\{x_i: 1 \leq i \leq \alpha\}, Y=\{y_i: 1 \leq i \leq \alpha\}$.
Join the vertex $x$ to the vertices $x_1,\ldots, x_{\alpha-3}$
and join the vertex $x'$ to the vertices $ y_1,\ldots, y_{\alpha-3}$.
Next, remove a perfect matching between the set of
vertices $x_1,\ldots, x_{\alpha-3}$ and the set of vertices
$y_1,\ldots, y_{\alpha-3}$.
Finally, for each clause $c =(y \vee z \vee w)$, where  $y,w,z \in X  $ add the edges $a_c y $,
$a'_c y'$, $a_c z  $, $a'_c z'$, $a_c w $ and $a'_c w'$. Also,
for each clause $c =(y \vee z )$, where  $y,w \in X  $ add the
edges $b_c y $, $b'_c y' $, $b_c z$ and $b'_c z'  $.
Call the resultant graph $G_{\Phi}$. The degree of every vertex in $G_{\Phi}$ is $\alpha$ or $2\alpha$ and the graph is bipartite.
Note that there are no two adjacent vertices of degree $2\alpha$.

First, assume that $reg(G_{\Phi})=2$ and let  $G_1 $ and $G_2$ be a regular decomposition of the graph $G_{\Phi}$ such that $G_i$
is $(r_i)$-regular, for each $i$, $i=1,2$.
The graph $G_{\Phi}$ has vertices with degrees $\alpha$ and $2\alpha$, so $r_1=r_2=\alpha$.
For every $x$, $x \in X$,  the vertex $x$ has degree $\alpha$, so all edges incident with the vertex
 $x$ are in the same part. For every $x\in X$,
if all edges incident with the vertex $x$ are in $G_1$, put $\Gamma(x)=true$ and if all edges incident with
the vertex $x$ are in $G_2$,
put $\Gamma(x)=false$.
According to the construction of $\mathcal{H}^{\alpha}_c$, ($\mathcal{I}^{\alpha}_c$, respectively),
 the set of edges $\{a_cx_1,\ldots, a_c x_{\alpha-1} \}$ ($\{b_cx_1,\ldots, b_c x_{\alpha-1} \}$, respectively)
 is in one part and the set  of  edges
 $\{a_cx'_1,\ldots, a_c x'_{\alpha-2} \}$ ($\{b_cx'_1,\ldots, b_c x'_{\alpha-1}\}$, respectively) is in another part.
Therefore, for every clause $c=(y \vee z \vee w)$, at most two of the three edges $a_c y $, $a_c z  $
and $a_c w $ are in $G_1$. Also, at most two of the three edges $a_c y $, $a_c z  $ and $a_c w $ are in $G_2$.
Similarly, for every clause $c=(y \vee z) $, exactly one of the two edges $b_c y $ and $b_c z  $ is in $G_1$
(Note that for every clause $c=(y \vee z \vee w)$, at most two of the three edges $a'_c y' $, $a'_c z'  $
and $a'_c w' $ are in $G_1$ and similarly exactly one of the two edges $b'_c y' $ and $b'_c z'$ is in $G_1$).
Hence, $\Gamma$ is an NAE satisfying assignment.
On the other hand, suppose that the formula $\Phi$ has an NAE satisfying assignment  $\Gamma : X  \rightarrow \{true, false\}$. For every variable $x\in X$, put all
edges incident with the the vertices $x$ and $x'$ in $G_1$ if and only if $\Gamma (x)=true$. By this method,
it is easy to show that $G_{\Phi}$ can be decomposed into two regular subgraphs.
This completes the proof.
\\
\\
(ii) For a given connected graph $G$, assume that $\Delta(G)=5$ and
let  $G_1 $ and $G_2$ be a regular decomposition of the graph $G$ such that $G_i$ is $(r_i)$-regular, for each $i$, $i=1,2$.
If $G$ is not a 5-regular graph, then two cases can be considered:

Case 1.1: $r_1=2$ and $r_2=3$. In this case the degree set of the graph must be a subset of $\{2,3,5\}$. Let
$G'$ be the induced graph on the set of vertices of degrees two and five.
A subgraph $F$ of a graph $G'$ is called a factor of $G'$ if $F$ is a
spanning subgraph of $G'$. If a factor $F$ has all of its degrees equal to $k$, it is called a $k$-factor.
Thus a 2-factor is a disjoint union of finitely many cycles that cover all vertices of $G'$.
It is well-known that the problem of finding a 2-factor can be solved in polynomial time by matching techniques.
In other words, a 2-factor can be constructed in polynomial time if the answer is YES (see \cite{MR1367739}, page 141).
The graph $G$ can be decomposed into a 2-regular and  a 3-regular graphs if and only if the graph  $G'$ has a 2-factor.

 Case 1.2: $r_1=1$ and $r_2=4$. In this case the degree set of graph must be a subset of $\{1,4,5\}$. Let
$G'$ be the induced graph on the set of vertices of degrees one and five.
It is easy to see that the graph $G$ can be decomposed into a 1-regular and a 4-regular graphs
 if and only if the graph $G'$ has a perfect matching.
Finding the maximum matching is in $ \mathbf{ P} $ (see \cite{MR1367739}, page 145), hence, the proof is completed.

Now, assume that $\Delta(G)=4$. If the graph $G$ is not a 4-regular graph, then two cases can be considered:

 Case 2.1: $r_1=2$ and $r_2=2$. In this case the degree  set of graph must
be a subset of $\{2,4\}$.
Consider the graph $H$ with the vertex set $\{v| d(v)=4, v\in V(G)\}$ and join two vertices $v$ and $u$ in $H$ if and only if there
is a path $v x_1\ldots x_t u$ in $G$ such that $d(x_1)=\cdots=d(x_t)=2$ (note that if $vu\in E(G)$, then we join
the vertex $v$ to the vertex $u$).
A $k$-factorization of $H$ is a partition of the
edges of $H$ into disjoint $k$-factors.
For $k\geq 1$, every $2k$-regular graph admits a $2$-factorization (see \cite{MR1367739}, page 140), therefore $H$ has a 2-factor,
thus the graph $G$ has a 2-factor. Consequently, in this case the graph $G$ always can be decomposed into
two regular subgraphs.

Case 2.2: $r_1=1$ and $r_2=3$. In this case the degree set of graph must be a subset of $\{1,3,4\}$. Let
$G'$ be the induced graph on the set of vertices of degree one and four.
It is easy to see that the graph  $G$ can be decomposed into a 1-regular and a 3-regular subgraphs
if and only if the graph $G'$ has a perfect matching.
Finding the maximum matching is in $ \mathbf{ P} $ (see \cite{MR1367739}, page 145), consequently the proof is completed.

Now, assume that $\Delta(G)=3$. If $G$ is not a 3-regular graph, then $r_1=1$ and $r_2=2$. Let
$G'$ be the induced graph on the set of vertices of degrees one and three.
It is easy to see that the graph $G$ can be decomposed into a 1-regular and a 2-regular graphs if and only if the graph $G'$ has a perfect matching.
The other cases for $\Delta(G)=1$ and $\Delta(G)=2$ are trivial. This completes the proof.

}\end{aliiii}

Note that in the proof of previous theorem from Case 2.1,
we have the following corollary.

\begin{cor}{
Let $G$ be graph with degree  set $\{2,4\}$. The edge set of the graph $G$ can be decomposed into two subgraphs
such that each subgraph is 2-regular.
}\end{cor}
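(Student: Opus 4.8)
The plan is to recognize that this corollary is precisely the content of Case 2.1 in the proof of part (ii) above, extracted and stated on its own, and so I would simply reconstruct that argument as a self-contained proof. The guiding observation is a local degree constraint: in any decomposition of $E(G)$ into two $2$-regular subgraphs $G[E_1]$ and $G[E_2]$, a vertex of degree $4$ must send exactly two of its incident edges into each part (in order to have degree $2$ in each), while a vertex of degree $2$ must send \emph{both} of its incident edges into the same part, since otherwise it would have degree $1$ in one of the parts, which is forbidden. Thus the degree-$2$ vertices are passive and merely propagate a single part-assignment along paths, whereas the degree-$4$ vertices are the only places where a genuine $2$--$2$ split must be decided.

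Next I would build the auxiliary multigraph $H$ exactly as in Case 2.1: set $V(H) = \{v \in V(G) : d_G(v) = 4\}$, and for every maximal path $v x_1 \cdots x_t u$ of $G$ whose internal vertices $x_1,\ldots,x_t$ all have degree $2$, add an edge between $v$ and $u$ in $H$ (allowing $t = 0$, i.e.\ a direct edge $vu$, as well as loops and parallel edges when the endpoints coincide or several such paths share endpoints). Since every degree-$4$ vertex of $G$ has its four incident edges extend to four such paths, $H$ is $4$-regular. I would then invoke the cited fact that for $k \geq 1$ every $2k$-regular graph admits a $2$-factorization (\cite{MR1367739}, page 140), applied with $k = 2$, to write $E(H) = F_1 \cup F_2$ where each $F_i$ is a $2$-factor of $H$.

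Finally I would pull this decomposition back to $G$: let $E_i$ be the union of the edge sets of all paths of $G$ corresponding to the edges of $F_i$, and assign every connected component of $G$ consisting entirely of degree-$2$ vertices (necessarily a cycle, hence already $2$-regular) to one of the two parts. At each degree-$4$ vertex exactly two of its four path-ends lie in $F_1$ and two in $F_2$, so that vertex has degree $2$ in each $G[E_i]$; each degree-$2$ vertex lies on a unique such path whose edges are assigned entirely to one $E_i$, giving it degree $2$ there. Hence both $G[E_1]$ and $G[E_2]$ are $2$-regular, as required. I expect the only point needing care to be the bookkeeping around the multigraph features of $H$—loops and parallel edges arising from closed or parallel degree-$2$ paths—and confirming that the $2$-factorization theorem genuinely applies to a multigraph; this is legitimate, since the standard proof (take an Eulerian orientation of each even component and split via König's edge-coloring of the associated bipartite graph) is insensitive to multiplicities and loops.
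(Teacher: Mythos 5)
Your proof is correct and follows essentially the same route as the paper, which derives this corollary directly from Case 2.1 in the proof of Theorem 4(ii): contract the maximal degree-$2$ paths to form the $4$-regular auxiliary graph $H$ on the degree-$4$ vertices, apply Petersen's $2$-factorization theorem, and pull the two $2$-factors back to $G$. Your added care about loops, parallel edges, and components of $G$ that are pure cycles of degree-$2$ vertices fills in details the paper leaves implicit, but it is the same argument.
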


Next, we consider the problem of determining the regular number for planar graphs.

\begin{alij}{
 Clearly, the problem is in $ \mathbf{NP} $. We reduce {\em  Monotone Planar 2-In-4 4-Sat} to our problem.
 Kara {\it et al.} \cite{kara2007complexity} proved
that the following problem is $ \mathbf{NP}$-complete.

 {\em  Monotone Planar 2-In-4 4-Sat.}\\
\textsc{Instance}: A 4-Sat formula $\Psi=(X,C)$
 such that there
 is no negation in the formula, and the
bipartite graph obtained by linking a variable and a clause if and only
 if the
 variable appears in the clause, is planar.\\
\textsc{Question}: Is there a truth assignment for $X$ such that
 each clause in $C$ has exactly
two true literals?

Let $\Psi=(X,C)$ be an instance of {\em Monotone Planar 2-In-4 4-Sat}.
We denote the number of the clauses containing the variable $x$ by
$\gamma(x)$.
We convert $\Psi$ into a planar
graph $G_\Psi$ with degree set $\{3,6\}$ such that $\Psi$ has a 2-in-4 satisfying assignment if and
only if  the edge set of $G_\Psi$ can be decomposed into two  regular subgraphs.
For each variable $x$, consider a cycle of length $\gamma(x)$ with the vertices $x_1, \ldots, x_{\gamma(x)}$,
in that order
and for each clause $c$ consider a copy the gadget $\mathcal{Z}_c$ which is shown in Fig \ref{P2}.
Finally, for every pair $(x,c)$, where $x\in X$ and $c\in C$, if $x$ appears in $c$, join the vertex $a_c$
to one of the vertices $x_1,x_2, \ldots, x_{\gamma(x)}$, such that in the resulting graph the degree of
every vertex in $\{x_1,x_2, \ldots, x_{\gamma(x)}\}$ is three. Call the resultant planar graph $G_\Psi$.

\begin{figure}[ht]
\begin{center}
\includegraphics[scale=.5]{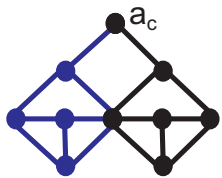}
\caption{The auxiliary gadget  $\mathcal{Z}_c$.
} \label{P2}
\end{center}
\end{figure}

First, assume that $reg(G_\Psi)=2$ and let  $G_1 $ and $G_2$ be a regular decomposition of the graph $G_\Psi$ such that $G_i$
is $(r_i)$-regular, for each $i$, $i=1,2$.
The graph $G_\Psi$ has vertices with degrees $3$ and $6$, so $r_1=r_2=3$.
For every $x \in X$, the  degrees of vertices $x_1,x_2, \ldots, x_{\gamma(x)}$ are $3$, thus, all edges incident
with one of the vertices $x_1,x_2, \ldots, x_{\gamma(x)}$ are in the same part. For every $x\in X$,
if all edges incident with the vertex $x_1$ are in $G_1$, put $\Gamma(x)=true$ and if all edges incident with
the vertex $x_1$ are in $G_2$,
put $\Gamma(x)=false$.
According to the construction of $\mathcal{Z}_c$
 the set of black edges are in one part and the set  of blue edges
 are in another part.
Therefore, for every clause $c=(x \vee y \vee z \vee w)$, exactly two edges from the four edges $a_c x $, $a_c y $, $a_c z  $
and $a_c w $ are in the subgraph $G_1$.
Hence, the function $\Gamma$ is a 2-in-4 satisfying assignment.
On the other hand, suppose that the formula $\Phi$ is 2-in-4 satisfiable with the satisfying
assignment $\Gamma : X  \rightarrow \{true, false\}$. For every variable $x\in X$, put all
edges incident with the vertices $x_1,x_2, \ldots, x_{\gamma(x)}$ in $G_1$ if and only if $\Gamma (x)=true$,
also for every gadget $\mathcal{Z}_c$, put the black edges in $G_1$ and the blue edges in $G_2$.
By this method,
it is easy to show that the graph $G_\Psi$ can be decomposed into two regular subgraphs.
This completes the proof.

}\end{alij}

It was shown that determining whether $reg(G)\leq \Delta(G)$ for a given
connected graph $ G $  is {\bf NP}-complete \cite{reg}. Here, we show that every graph $G$ can be decomposed into $\Delta(G)$
subgraphs such that each subgraph is locally regular and this bound is sharp for trees.

\begin{alijj}{
We use the concept of semi-coloring to prove our
theorem. Daniely and Linial defined a semi-coloring of graphs for the investigation of the
tight product of graphs \cite{daniely2012tight}. Afterwards, Furuya {\it et al.} proved that every graph has
a semi-coloring \cite{furuya2014existence}. Let  $G$ be a graph. For $i\in \{1,2\}$, let $ {[\Delta(G)]} \choose i$ denotes
the family of subsets of $\{1,2,\ldots, \Delta(G)\}$ with cardinality exactly $i$. A semi-coloring of the graph $G$ is a coloring

\begin{center}
$\ell:  E(G) \rightarrow {{[\Delta(G)]} \choose 1} \cup {{[\Delta(G)]} \choose 2}$
\end{center}
\ \\
such that for every $v \in V (G)$,
\\
(1) For each $i$, $i=1,2,\ldots, \Delta(G)$, $\sum_{e \ni v} w_i(e) \in {0,1}$, where

\begin{center}{

$ w_i(e)=\left\{ \begin{array}{ll}  \frac{1}{|\ell(e)|} &   i\in \ell(e) \\ 0 & i\notin \ell(e) \ .
 \end{array} \right.$

}\end{center}
\ \\
and
\\
(2) For any $1 \leq i < j \leq \Delta(G)$, $|\{e: e \ni v , \ell(e)=\{i,j\} \}| \in \{0,2\}$.

Let $G$ be a graph
with maximum degree $\Delta(G)$ and semi-coloring  $\ell$. Define the following decomposition for the edges of the graph $G$.
\\
\\
For each $i$, $i=1,2,\ldots, \Delta(G)$, $\mathcal{P}_i=\{ e : \ell(e)=\{i\}  \} \cup \{  e: \ell(e)=\{i,j\}, i <j \}$.

For any semi-coloring  $\ell$ of the graph $G$ and any $i,j$, where $1 \leq i < j \leq \Delta(G) $, each component
of the subgraph of $G$ induced by edges with the color $\{i, j\}$ is a singleton or a cycle. So, each component
of each part $\mathcal{P}_i$ is an edge or a cycle. Thus, the regular chromatic index of the graph $G$ is at most $\Delta(G)$.
Since every graph has a semi-coloring \cite{furuya2014existence}, so our proof is completed.

Now, let $T$ be a tree. The tree $T$ does not have any cycle, so in every edge decomposition, each part is a matching.
On the other hand, the edge chromatic number of the tree $T$ is equal to  $ \Delta(T) $ \cite{MR1367739}, therefore,
 $\chi_{reg}'(T)= \Delta(T)$. This completes the proof.
}\end{alijj}

In Theorem \ref{T3}, we prove that there is polynomial time algorithm to decide whether $ reg(G) = 2 $ for a given graph $ G $
with maximum degree five. Here, we show that deciding whether a given  subcubic graph can be decomposed into two
subgraphs such that each subgraph is locally regular, is {\bf NP}-complete.

\begin{alijjj}{
We reduce {\em Cubic Monotone  NAE (2,3)-Sat} to our problem.
It is shown that the following version of  {\em NAE satisfying assignment problem} is {\bf NP}-complete \cite{reg}.

{\em Cubic Monotone  NAE (2,3)-Sat.}\\
\textsc{Instance}: Set $X$ of variables, collection $C$ of clauses over $X$ such that each
clause $c \in C$ has $\mid c  \mid \in \{ 2, 3\}$, every variable appears in
exactly three clauses and there is no negation in the formula.\\
\textsc{Question}: Is there a truth assignment for $X$ such that each clause in $C$ has at
least one true literal and at least one false literal?\\

Let $\Phi=(X,C)$ be an instance of {\em  Cubic Monotone  NAE (2,3)-Sat}.
We convert the formula $\Phi$ into a
graph $G$ such that $\Phi$ has an NAE satisfying assignment if and
only if  the edge set of $G$ can be decomposed into two
subgraphs such that each subgraph is locally regular.
We use two gadgets $\mathcal{U}(c)$ and $\mathcal{W}(x)$. The gadget $\mathcal{U}(c)$ is shown in Fig. \ref{P9}.

\begin{figure}[ht]
\begin{center}
\includegraphics[scale=.57]{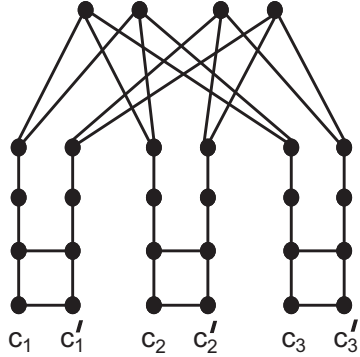}
\caption{The auxiliary gadget  $\mathcal{U}(c)$.
} \label{P9}
\end{center}
\end{figure}

{\bf Construction of $\mathcal{W}(x)$}\\
Consider a cycle $C_6$, with vertices $v_1, u_1, v_2,u_2, v_3,u_3$, in that order. For each  $i$, $1 \leq i \leq 3$,
put a triangle and join one the vertices of that triangle to the vertex $u_i$. The resultant graph  has nine vertices with degree two
and six vertices with degree three. The three edges between triangles and $C_6$ are called {\it blue edges} and other edges of that graph
are called {\it black edges}. Call the resultant graph $\mathcal{S}(x)$. Now consider two copies of $\mathcal{S}(x)$ and rename the
vertices $v_1,v_2,v_3$ ($v_1,v_2,v_3$) in the first copy (second copy) of $\mathcal{S}(x)$, by $x_1,x_2,x_3$ ($x'_1,x'_2,x'_3$),
respectively.
Call the resulting gadget $\mathcal{W}(x)$.

{\bf Construction of the graph $G$}\\
Let $\Phi=(X,C)$ be an instance of {\em  Cubic Monotone  NAE (2,3)-Sat}.
For every clause $c \in C$, if $|c|=3$, then put a copy of the  gadget $\mathcal{U}(c)$ and if $|c|=2$ put two vertices $c$ and $c'$.
Also, for every variable $x \in X$, put a copy of the gadget $\mathcal{W}(x)$.
Now, for any clause $c$ containing $x$,
choose an index $i$, $1\leq i \leq 3$, if $|c|=2$, then put the two edges $cx_i$, $c'x_i'$ and if $|c|=3$
choose an index $j$, $1\leq j \leq 3$ and add the two edges $c_jx_i$, $c_j'x_i'$.
Do these so that in the resultant graph $G$,
the degree of every vertex is at most 3.

Suppose that the graph $G$ can be decomposed into two
subgraphs such that each subgraph is locally regular. Since $G$ is subcubic,  each component of every subgraph is an edge or a
cycle. Thus, in each copy of the gadget $\mathcal{W}(x)$, the set of blue edges is in one subgraph
and the set of black edges is in another
subgraph ({\bf Property 1}). For each clause $c=(x,y)$, without loss of generality suppose that $cx_1, cy_1 \in E(G)$.
By the structure of $G$, the two edges $cx_1 $ and $ cy_1$ are in different subgraphs, similarly,
for every clause $c=(x,y,z)$, without loss of generality suppose that $c_1x_1, c_2 y_1, c_3 z_1 \in E(G)$.
By the structure of $\mathcal{U}(c)$, the
edges $c_1x_1, c_2 y_1, c_3 z_1$ are not in the same subgraph ({\bf Property 2}).
Now, assume that $G$ can be decomposed into two
subgraphs $G_1$ and $G_2$ such that each subgraph is locally regular.
For every $x\in X$, if the two edges incident with $x_1$ in the gadget $\mathcal{W}(x)$ are in $G_1$, put $\Gamma(x)=true$ and
otherwise put $\Gamma(x)=false$. By Property 1 and Property 2, it is easy to check that $\Gamma(x)$ is an NAE assignment.

}\end{alijjj}

Here, We prove that for
each $k>1 $, deciding whether $\chi'_{k-irr}(G)=2$ for a given planar bipartite graph $G$  is ${\bf NP}$-complete.
For all $k$ we prove the lower bound $h(k) \geq 2k+1$ and we will use
 mutually orthogonal Latin squares and prove that  $h(k)=\Omega(k^2)$.

\begin{alijjjk}{
{\bf (i)}
We reduce {\em  Monotone Planar 2-In-4 4-Sat} to our problem.
 Kara {\it et al.} \cite{kara2007complexity} proved
that the following problem is $ \mathbf{NP}$-complete.

 {\em  Monotone Planar 2-In-4 4-Sat.}\\
\textsc{Instance}: A 4-Sat formula $\Phi=(X,C)$
 such that there
 is no negation in the formula, and the
bipartite graph obtained by linking a variable and a clause if and only
 if the
 variable appears in the clause, is planar.\\
\textsc{Question}: Is there a truth assignment for $X$ such that
 each clause in $C$ has exactly
two true literals?

Assume that $k\geq 2$ is a fixed integer. Let $\Phi=(X,C)$ be an instance of {\em  Monotone Planar 2-In-4 4-Sat}.
 We convert the formula $\Phi$ into a
 graph $G_k$ such that the formula $\Phi$ has a 2-in-4 satisfying assignment if and
 only if  the edge set of the graph $G_k$ can be decomposed into two  locally $k$-irregular subgraphs.
First, we introduce two useful gadgets $\mathcal{A}_{\alpha}$ and $\mathcal{B}(c)$.
 We denote the number of the clauses containing the variable $x$ by
$\gamma(x)$.

{\bf Construction of $\mathcal{A}_{\alpha}$}\\
Let $C_{4\alpha}$ be a cycle with $4\alpha$ vertices $v_1, v'_1, u_1, u'_1,  \ldots, v_{\alpha}, v'_{\alpha}, u_{\alpha},
 u'_{\alpha}$ in that order.
For each  $i$, $i=1,2,\ldots, \alpha$, put $2k-2$ new vertices $w^i_1, z^i_1, \ldots, w^i_{k-1}, z^i_{k-1}$ and join the vertex
$v_i$ to the vertices $w^i_1, \ldots, w^i_{k-1}$; also, join the vertex
$u_i$ to the vertices $z^i_1, \ldots, z^i_{k-1}$. Call the resulting bipartite graph $\mathcal{A}_{\alpha}$. In
the gadget $\mathcal{A}_{\alpha}$, call the set of vertices $w^1_1, w^2_1, w^3_1, \ldots, w^{\alpha}_1$, {\it main vertices} and
call the set of edges incident with the main vertices, {\it main edges}.

\begin{figure}[ht]
\begin{center}
\includegraphics[scale=.4]{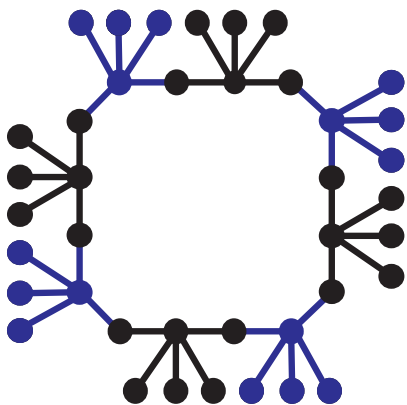}
\caption{The auxiliary gadget  $\mathcal{A}_{4}$, where $k=4$.
} \label{Newpic003}
\end{center}
\end{figure}

{\bf Construction of $\mathcal{B}(c)$}\\
Let $P_5$ be a path with vertices $p_1, p_2, \ldots, p_5$, in that order. Put $2k-2$ new vertices $q_1,q'_1, q_2,q'_2, \ldots,
q_{k-1},q'_{k-1}$, join the vertex $p_2$ to the vertices $q_1,q_2,\ldots, q_{k-1}$ and
join the vertex $p_4$ to the vertices $q'_1,q'_2,\ldots, q'_{k-1}$. Call the resulting graph $\mathcal{D}$.
Note that in $\mathcal{D}$, we have $d(p_1)=d(p_5)=1$. Now put a new vertex $c$ and $k-1$ copies of $\mathcal{D}$.
In each copy of $\mathcal{D}$ join the vertices $p_1$ and $p_5$ to the vertex $c$. Call the resulting graph $\mathcal{B}(c)$.
See Fig. \ref{Newpic003}.

\begin{figure}[ht]
\begin{center}
\includegraphics[scale=.27]{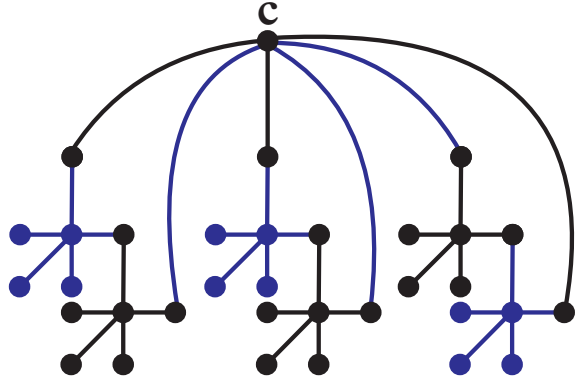}
\caption{The auxiliary gadget  $\mathcal{B}(c)$, where $k=4$.
} \label{Newpic003}
\end{center}
\end{figure}

In the next, we introduce the construction of the graph $G_k$. For every variable $x\in X$, put a copy of the gadget
 $\mathcal{A}_{\gamma(x)}$ (we
call this copy of the gadget $\mathcal{A}_{\gamma(x)}$, the gadget corresponds to the variable $x$) and
for each clause $c \in C$, put a copy of the gadget $\mathcal{B}(c)$.
Now, for any clause $c$ containing $x$, connect one of the main vertices of a copy of $\mathcal{A}_{\gamma(x)}$ corresponding
to the variable $x$  to the
vertex $c$ of $\mathcal{B}(c)$.
Do these procedures for all variables $x$ and all clauses $c$, in such a way that in the resultant graph $G_k$ the degree of every main vertex is
two and for every $c\in C$ the degree of vertex $c$ is $2k+2$.

First, suppose that the graph $G_k$ can be decomposed into two locally $k$-irregular subgraphs $\mathcal{I}$ and $\mathcal{I}'$.
Since the  degree set of $G_k$ is $\{1,2,k+1,2k+2\} $, by the structure of the graph $G_k$, for every vertex $v$ of degree two, if
$e,e' \ni v$, then $e\in E(\mathcal{I})$ and $e\in E(\mathcal{I}')$ or vice versa.
By the structure of $\mathcal{A}_{\alpha}$, for each copy of $\mathcal{A}_{\alpha}$, the set of its main edges is in $\mathcal{I}$
or $\mathcal{I}'$ (Fact 1).
Furthermore, by the structure of $\mathcal{D}$, if we consider the induced graph on the  set of vertices $V(\mathcal{B}(c))$,
we have $d_{\mathcal{I}}(c)=d_{\mathcal{I'}}(c)=k-1$ (Fact 2).

Now, assume that $G_k$ can be decomposed into two locally $k$-irregular subgraphs $\mathcal{I}$, $\mathcal{I}'$.
Let $\Gamma : X \rightarrow \{true,false \} $ be a function such that $\Gamma(x)=true$ if and only if  the set of main edges
of the gadget $\mathcal{A}_{\gamma(x)}$ corresponding to the variable $x$ is in $\mathcal{I}$.
By Fact 1 and 2, it is easy to see that
$\Gamma$ is a 2-in-4 satisfying assignment.
Next, suppose that $\Gamma : X \rightarrow \{true,false \} $ is a 2-in-4 satisfying assignment. For every variable $x$
put the set of main edges
of the gadget $\mathcal{A}_{\gamma(x)}$ corresponds to the variable $x$ in $\mathcal{I}$ if and only if $\Gamma(x)=true$.
It is easy
to see that $\mathcal{I}$ can be extended to a $k$-irregular graph such that $G\setminus
E(\mathcal{I})$ is also a $k$-irregular graph. This
completes the proof.
\\
\\
(ii)
In order to show that $h(k) \geq 2k+1$, it is enough to present a graph $G$ such that  $\chi'_{k-irr}(G)\geq 2k+1$.
Consider a cycle $\mathcal{C}=v_1v_2v_3$. Put $k-1$ new vertices and join them to the vertex $v_1$,
also put  $2k$ new vertices and join them to the vertex $v_2$. Next, put $2k-1$ vertices $u_1, u_2, \ldots, u_{2k-1}$
and join them to the vertex $v_3$. Finally, for every $i$, $i=1,2,\ldots, 2k-1$ put $k$ new vertices and join them to the vertex
$u_i$. Call the resultant graph $G$. It is easy to check that $G$ can be decomposed into locally $k$-irregular graphs.
Suppose that $\chi'_{k-irr}(G)< 2k+1$ and let $E_1,E_2, \ldots, E_t$ be a decomposition  of $E(G)$ such
that $G[E_i]$ is locally $k$-irregular for every $i =1,2,\ldots,t$.

There are vertices of degree one in the neighbors of vertex $u_1$, choose one of these vertices call it $z$.
Let $\mathcal{I}$ be the induced graph on the set of edges $E_1$ and without loss of generality assume that
$d_{\mathcal{I}}(z)=1 $. Since $d_{G}(u_1)=k+1 $, we have $d_{\mathcal{I}}(u_1)=k+1 $. Since $\chi'_{k-irr}(G)< 2k+1$, we have $d_{\mathcal{I}}(v_3)=2k+1 $. Therefore $v_1v_3, v_2v_3 \in E_1$.
It is easy to see that $d_{\mathcal{I}}(v_1)=k+1 $ (otherwise we obtain a contradiction). Hence $v_1v_2 \in E_1$. Therefore
$d_{\mathcal{I}}(v_2)\geq 2 $,
but this is a contradiction. Thus $\chi'_{k-irr}(G)\geq 2k+1$.

Now, assume that $k\geq 4$. Consider a cycle $\mathcal{C}=v_1v_2v_3$. Put $k-2$ new vertices and join them to the vertex $v_1$,
also put  $2k-2$ new vertices $u_1,u_2, \ldots, u_{2k-2}$ and join them to the vertex $v_2$. Next, put $3k-2$ vertices
and join them to the vertex $v_3$. Finally, for every $i$, $i=1,2,\ldots, 2k-3$ (note that $i\neq 2k-2$), put $k$ new
vertices and join them to the vertex
$u_i$. Call the resultant gadget $\mathcal{S}$. Now, consider $4k$ copies of $\mathcal{S}$ and three isolated vertices
$z_1,z_2,z_3$. Next, for every $i$, $i=1,2,3$, join the vertex $z_i$ to the vertex $v_i$ in each copy of $\mathcal{S}$.
Call the resulting graph $G$. It is easy to check that $G$ can be decomposed into locally $k$-irregular subgraphs
and in every decomposition of $G$, all edges of a copy of the gadget $\mathcal{S}$ are in the same subgraph.
Thus, by the structure of the graph $G$, $\chi'_{k-irr}(G)= 4k$.\\ \\
(iii)
Let $k$ be a sufficiently large number and $p$ be a
prime number such that $0.1 k \leq p \leq 0.2 k$.
(Bertrand's postulate states that for any integer $d > 3$,
there always exists at least one prime number $p$ with $d< p < 2d-2$ \cite{dressler1972stronger}).

A Latin square of order $n$ is an $n \times n$ matrix such that  every element of $\{1,2,\ldots,n\}$ occurs exactly once in each row
and each column.
A set of Latin squares is called mutually orthogonal Latin squares (MOLS) if every pair of its element Latin squares
is orthogonal to each other \cite{graham1995handbook}. (Two Latin squares $L_1$ and $L_2$ are orthogonal
if for any $(i, j)$, there exists unique $(k, l)$ such that $L_1(k, l) = i$
and $L_2(k, l) = j$).

Let $\mathcal{L}^1, \ldots, \mathcal{L}^{p-1}$ be a set of  MOLS of order $p$, with elements from $\{1, \ldots, p\}$ (If $p$ is prime,
then there exist $p-1$ MOLSs of order $p$ \cite{graham1995handbook}).
Consider $p$ copies of the complete graph $K_{\lfloor k/2 \rfloor +1}$
and let the vertex set of the $\alpha$th copy of $K_{\lfloor k/2 \rfloor +1}$ be $\{v_i^{\alpha}
: 1 \leq i \leq  \lfloor k/2 \rfloor +1 \}$. For each pair $(\alpha,i)$, where $ 1 \leq \alpha  \leq p $ and
$1 \leq i \leq  \lfloor k/2 \rfloor +1 $, add $\lceil k/2 \rceil +i$ new
vertices  $\{w_{ij}^{\alpha}: 1 \leq j \leq \lceil k/2 \rceil +i \}$ to the graph and join them to the vertex $v_i^{\alpha}$
(so the degree of the vertex $v_i^{\alpha}$ is $k+i$).
Finally, for every $i,j,r$ and $a$ identify the vertices $w_{ij}^1$ and $w_{aj}^r$ if and only if the $(i, j)$ element of
$(r-1)$th Latin square is $a$ (for a Latin square $L$ of order $n$, the element on the $i$th row and the $j$th
column is denoted by $(i, j)$ element of $L$). Call the resultant graph $G_k$.

Now, we show $\chi'_{k-irr} (G_k)\geq \Omega (k^2)$.
By the structure of $K_{\lfloor k/2 \rfloor +1}$s and their incident leaves, all of the  edges
$v_{i}^{\alpha} w_{ij }^{\alpha}$, $  1 \leq j \leq \lceil k/2 \rceil +i $
appear in one part; and the degree of the vertex $w_{ij }^{\alpha}$ in that part should be one {\bf (Property A)}.\\ Also,
by the structure of the graph, for every $\alpha$, the two edges $v_i^{\alpha} w_{ij}^{\alpha}$ and
$v_{i'}^{\alpha} w_{i'j}^{\alpha}$ should appear in different parts {\bf (Property B)}.

By the structure of MOLS, for every $\alpha,\alpha', \beta, \beta'$, $2 \leq \alpha < \alpha'\leq p, 1\leq \beta \leq  \beta' \leq p$ there
 are $i,j$
such that $(i, j)$ element of
$(\alpha-1)$th Latin square is $\beta$ and $(i, j)$ element of
$(\alpha'-1)$th Latin square is $\beta'$. So, $w_{ij}^{1}$ and $w_{\beta j}^{\alpha}$ were merged; also $w_{ij}^{1}$
and $w_{\beta'j}^{\alpha'}$
were merged. Thus, two vertices $w_{\beta j}^{\alpha}$ and $w_{\beta'j}^{\alpha'}$ were merged.
Therefore by Property A, the two edges $v_{\beta}^{\alpha} w_{\beta j}^{\alpha}$
and $v_{\beta '}^{\alpha'} w_{\beta'j}^{\alpha'}$ appear
 in two different parts. By this fact and Property B, we have $\chi'_{k-irr} (G_k)\geq \Omega (k^2)$.
\\ \\
(iv) Let $k$ be a fixed number and $G$ be a graph with maximum degree $k+1$.
If $G$ can be decomposed into two locally $k$-irregular subgraphs, then $G$ meets the following three necessary conditions.
\\
{\bf Condition A.} There are no two adjacent vertices of degrees less than $k+1$.
\\
{\bf Condition B.} There are no two adjacent vertices of degrees $k+1$.
\\
{\bf Condition C.} If $u$ and $v$ are two adjacent vertices and $d(u)=k+1$, then $d(v)\leq 2$.

Suppose that $G$ has the above three conditions. Note that these conditions can be checked in polynomial time.
Let $S=\{v : d_G(v)=k+1\}$, and construct the graph $G^*$ with the vertex set $S$.
For every two distinct vertices $u, v \in S$, join the vertex $u$ to the vertex $v$ in $G^*$, if and only if there
is a vertex $z$ in $G$ such that
$vz,uz\in E(G)$ and $d_G(z)=2$. It is easy to check that $G$ can be decomposed into two locally
$k$-irregular subgraphs if and only if
$G^*$ is bipartite. Since, there is a polynomial-time algorithm
for determining whether a given graph  has a chromatic
number at most 2, therefore, our proof is completed.

}\end{alijjjk}

\section{Concluding remarks and further research}

In this work, we considered the set of problems which is related to decomposition of graphs into regular, locally regular
 and locally irregular
subgraphs and we presented some
polynomial time algorithms, {\bf NP}-completeness results, lower bounds and upper bounds for them.
A summary of results and open problems were shown in Table 1 and Table 2. Here, we present some remarks.

There exist infinitely many trees with irregular chromatic
index three \cite{baudon2013decomposing}. Baudon {\it et al.} proved that the problem of determining
the irregular chromatic index of a graph can be handled in linear
time when restricted to trees \cite{bensmail2013complexity}.
It is then natural to ask if the same holds for bipartite graphs.
The following problem which was also asked by Baudon {\it et al.} in \cite{bensmail2013complexity} remains unsolved.

\begin{prob} [Baudon {\it et al.} \cite{bensmail2013complexity}]
Determine the computational complexity of deciding whether $\chi'_{irr}=2$ for bipartite graphs.
\end{prob}

Baudon {\it et al.}
characterized all connected graphs which cannot be decomposed into locally
$1$-irregular subgraphs and  call them exceptions \cite{baudon2013decomposing}.
For each $k> 1$, it would be interesting to characterize
all connected graphs which cannot be decomposed into locally
$k$-irregular subgraphs.

\begin{prob}
For each $k$, characterize
all connected graphs which cannot be decomposed into locally
$k$-irregular subgraphs.
\end{prob}

\begin{prob}
For each $k$, can one decide in polynomial time whether a given graph $G$, can be decomposed into locally
$k$-irregular subgraphs?
\end{prob}

Baudon {\it et al.} conjectured $h(1)\leq 3$ \cite{baudon2013decomposing}.
Here, we proved that $h(k) \geq 2k+1$ and  $h(k)=\Omega(k^2)$. The next question is to determine whether there is a function
$f$ in terms of $k$ such that $h(k) \leq f(k)$.

\begin{prob}\label{problem1}
Does there is a  function
$f$ in terms of $k$ such that $h(k) \leq f(k)$?
\end{prob}

In this work, by using mutually orthogonal Latin squares we proved that  $h(k)=\Omega(k^2)$. Finding a better lower bound can
be interesting.

It was shown that determining whether $reg(G)\leq \Delta(G)$ for a given connected graph $ G $  with minimum degree one
is {\bf NP}-complete \cite{reg}. Does there exist a graph $G$ with minimum degree two, such that $reg(G)= \Delta(G)+1$?

\begin{prob}
For each $t$, does there exist a graph $G$ with minimum degree $t$, such that $reg(G)= \Delta(G)+1$?
\end{prob}

We proved that every graph $G$ can be decomposed into $\Delta(G)$
subgraphs such that each subgraph locally regular and this bound is sharp for trees.
Characterizing all connected graphs which cannot be decomposed into $\Delta(G)-1$
subgraphs such that each subgraph is locally regular is interesting.

\begin{prob}
Characterize all connected graphs which cannot be decomposed  into $\Delta(G)-1$
subgraphs such that each subgraph is locally regular.
\end{prob}

Every planar graph $G$ with degree set $\{2, 4\}$ can be decomposed into two regular subgraphs.
We proved that
determining whether $ reg(G) = 2 $ for a given planar graph $G$ with degree set $\{3, 6\}$
is {\bf NP}-complete. With a similar technique we can show that deciding whether $ reg(G) = 2 $ for a
given planar graph $G$ with degree set $\{5, 10\}$
is {\bf NP}-complete. Since there is no $6$-regular planar graph, the only remaining case
is planar graphs with degree set $\{4, 8\}$.

\begin{prob}
Determine the computational complexity of deciding whether $ reg(G) = 2 $ for
planar graphs with degree set $\{3, 6\}$.
\end{prob}

In this work, for every $k>2$, we constructed a tree $T$ with $\Delta(T)= k$ such that
$T$ cannot be decomposed into a matching and
a locally irregular subgraph and also, we proved that every tree
can be decomposed into two matchings and a locally irregular subgraph. Computational
complexity of determining whether a given tree
can be decomposed into a matchings and a locally irregular subgraph remains open.

\begin{prob}
Determine the computational complexity of deciding whether a given tree
can be decomposed into a matching and a locally irregular subgraph.
\end{prob}

Regarding the error that we found in \cite{bensmail2013complexity}, we should mention that in order to preserve the planarity it would be sufficient to show
that if
all variables are inside a clause cycle (a cycle which connects all clauses in an arbitrary order)
 and the graph is planar, then 1-in-3
Satisfiability is $ \mathbf{NP}$-complete.
It is interesting to mention that although
Planar 3-SAT is $ \mathbf{NP}$-complete if all clauses are inside the variable cycle
and the graph is planar, then the problem is polynomial-time solvable \cite{dem}.
A special case is when all clauses are also connected in a path. Then the problem is
still in $ \mathbf{P}$ because we can show that this implies that clauses are all on one side of the variable cycle \cite{dem}.
We conjecture that the 1-in-3
Satisfiability in that case is also polynomial-time solvable.

\section{Acknowledgment}

The work of the second author was done during a sabbatical at the School of
Mathematics and Statistics, Carleton University, Ottawa. The second author is grateful to Brett Stevens for hosting this
sabbatical.

\small

\bibliographystyle{plain}
\bibliography{luckyref}

\end{document}